%
%
\documentclass[reprint,aps,prb, floatfix,amsmath,amssymb,superscriptaddress]{revtex4-2}

\usepackage{graphicx}
\usepackage{dcolumn}
\usepackage{bm}
\usepackage{xcolor} 
\usepackage{subfigure}
\usepackage{amsthm}
\usepackage{ulem}
\usepackage[colorinlistoftodos]{todonotes}
\usepackage{hyperref}
\usepackage{CJK} 

\DeclareMathOperator{\supp}{supp}
\DeclareMathOperator{\diam}{diam}
\DeclareMathOperator{\dist}{dist}
\DeclareMathOperator{\tbs}{\backslash}
\newcommand{\mtude}[1]{\lvert#1\rvert}
\newtheorem{theorem}{Theorem}
\newtheorem{definition}{Definition}
\newtheorem{lemma}{Lemma}

\newcommand*{\ket}[1]{|#1\rangle}
\newcommand*{\bra}[1]{\langle#1|}
\newcommand*{\norm}[1]{||#1||}
\newcommand*{\bignorm}[1]{\bigg|\bigg|#1\bigg|\bigg|}
\newcommand*{\Fst}[6]{\left[F^{#1#2#3}_#4\right]_{#5#6}}
\newcommand*{\Fbent}[6]{\left[F^{#1#2}_{#3#4}\right]_{#5#6}}
\newcommand*{\Fbtos}[6]{\Fst{#3}{#5}{#2}{#6}{#1}{#4}}






\begin{document}

\begin{CJK*}{UTF8}{min}


\title{Isometric Tensor Network Representation of String-Net Liquids}

\author{Tomohiro Soejima (副島智大)}
\affiliation{Department of Physics, University of California, Berkeley, CA 94720, USA}
\author{Karthik Siva}%
\affiliation{Department of Physics, University of California, Berkeley, CA 94720, USA}
\author{Nick Bultinck}
\affiliation{Department of Physics, University of California, Berkeley, CA 94720, USA}
\author{Shubhayu Chatterjee}
\affiliation{Department of Physics, University of California, Berkeley, CA 94720, USA}
\author{Frank Pollmann}
\affiliation{Technische Universit\"at M\"unchen, Physics Department T42, 85747 Garching, Germany}
\affiliation{Munich Center for Quantum Science and Technology (MCQST), Schellingstr. 4, D-80799 M\"unchen}
\author{Michael P. Zaletel}
\affiliation{Department of Physics, University of California, Berkeley, CA 94720, USA}
\date{\today}

\begin{abstract}
Recently, a class of tensor networks called isometric tensor network states (isoTNS) was proposed which generalizes the canonical form of matrix product states to tensor networks in higher dimensions. While this ansatz allows for efficient numerical computations, it remained unclear which phases admit an isoTNS representation. In this work, we show that two-dimensional string-net liquids, which represent a wide variety of topological phases including discrete gauge theories, admit an exact isoTNS representation. We further show that the isometric form can be preserved after applying a finite depth local quantum circuit. Taken together, these results show that long-range entanglement by itself is not an obstruction to isoTNS representation and suggest that all two-dimensional gapped phases with gappable edges admit an isoTNS representation. 
\end{abstract}

\maketitle

\end{CJK*}



\section{Introduction}
A central challenge in quantum many-body physics is to accurately and efficiently represent wavefunctions and evaluate expectation values of observables and correlation functions. The matrix product state (MPS) ansatz has proven to be a computationally efficient and accurate way of representing ground states of many Hamiltonians in 1D. The remarkable success of this ansatz can be attributed to three key features. First, ground states of gapped Hamiltonians in 1D exhibit a so-called ``area law" entanglement entropy, a property which is built into the variational class of MPS with finite bond dimension \cite{Hastings07,Arad12}. Second, by exploiting a gauge freedom inherent to tensor network representations of many-body states, MPS can be put in a canonical form that allows for an efficient calculation of expectation values of local operators. Third, there exist algorithms such as the Density Matrix Renormalization Group (DMRG) \cite{White92,Schollwock11}, Time-Evolving Block Decimation (TEBD) \cite{Vidal03,Vidal04} and the Time-Dependent Variational Principle (TDVP) \cite{Haegeman11}, which can efficiently explore the MPS manifold and optimize the variational parameters contained in the local tensors that generate the ground state ansatz \cite{Arad17}. 

Tensor networks for systems in higher dimensions have been studied, such as Projected Entangled Pair States (PEPS) \cite{Nishino96, Verstraete04}. However, without a canonical form, tensor contraction of a general network is \#P-hard \cite{Schuch07}. Consequently, evaluating reduced density matrices is either costly or can only be done approximately using e.g. the corner transfer matrix \cite{Baxter68,Nishino96,Orus09,Corboz16,Vanderstraeten16} and boundary MPS methods \cite{Verstraete04,Jordan08,Vanderstraeten16}, or the tensor renormalization approach \cite{Levin07,Xie09,Evenbly15,Yang15}. 

Recently, subclasses of 2D tensor networks, motivated by the canonical form for MPS, have been proposed \cite{Zaletel19, Haghshenas19}.  The ``isometric tensor network state'' (isoTNS) defined in Ref.~\cite{Zaletel19} generalizes the isometry conditions of the canonical form of MPS to higher dimensional tensor networks and generalizes the orthogonality center to an orthogonality hypersurface (see App.~\ref{app:isoTNS} for details). As a result, computations in this ansatz are significantly faster than in unconstrained tensor networks. The cost of the full-update, for example, is reduced from $O(\chi^{12}) \to O(\chi^6)$ \cite{Orus09}, where $\chi$ is the dimension of the tensors. However, as isoTNS are a restricted subclass of PEPS, it remains unclear what quantum phases they can accurately represent. It is crucial to understand this point because if there is a property common among ground states of interest which forbids their representation as an isoTNS, then tensor network calculations will remain biased even as the bond dimension increases.

In this paper, we pursue this line of investigation by characterizing the representative power of the isoTNS ansatz. We first find numerically that a family of states in the toric code phase can be represented by an isoTNS. Motivated by this result, we then prove analytically that the string-net liquid states introduced in Ref.~\cite{Levin05} admit \textit{exact} isoTNS representations with finite bond dimension, and further show that states generated from these fixed-point wavefunctions by finite-depth quantum circuits also admit exact isoTNS descriptions.
More specifically we prove that:
\begin{enumerate}
    \item For every string-net liquid model as defined in \cite{Levin05} and generalized in \cite{Lin14}, there exists an exact representation of the fixed-point ground state wavefunction as an isoTNS with finite bond dimension, such that the orthogonality hypersurface may be placed anywhere.
    \item All wavefunctions which may be transformed to string-net liquid fixed points by local unitary circuits of finite depth also admit exact isoTNS representations of finite bond dimension.
\end{enumerate}

The states generated by  finite-depth circuits acting on  string-net wavefunctions include a large class of 2D quantum phases, including (but not limited to) all bosonic abelian topological orders with gappable edges \cite{Lin14}. It is widely believed (though to our knowledge not proven) that all bosonic gapped phases with a gappable edge have a string-net representation. Taken together, these results show that long-range entanglement does not form an obstruction for isoTNS representations, and suggest that the ground states of gapped Hamiltonians with gappable edges can be efficiently represented as an isoTNS 
Note, however, that this result does not include phases with chiral topological order, such as the integer quantum Hall effect or Laughlin states, which feature gapless chiral edge states. This is consistent with the folklore that tensor networks with exponentially decaying correlations cannot represent states with chiral topological order \cite{Dubail15,Wahl}. More generally, it does not include models with ungappable edges, which need not be chiral \cite{Levin13}.


The paper is organized as follows. In Sec.~\ref{sec:isoTNS_intro} we briefly review the motivation and definition of the isoTNS ansatz and refer to App.~\ref{app:isoTNS} for more details and review of graphical notation. In Sec.~\ref{sec:numerics} we numerically show that a family of states in the toric code phase, which has topological order, can be efficiently represented using isoTNS ansatz. In Sec.~\ref{sec:string_net_tensors} we state a standard definition of the string-net liquid wavefunction in terms of $F$-symbols and review its tensor network representation. Those who are more interested in the physical implications and not in the technical details can stop at this section and skip ahead to the conclusion.  In Sec.~\ref{sec:string_isometry} we construct a new string-net tensor network which satisfies the local isometry conditions of the isoTNS ansatz. As explained further below, the main problem involves promoting the unitarity of the $F$-symbols in restricted subspaces to  an isometry in the relevant ancilla space of a single tensor. The main technical tools used are a method for placing the fusion constraints along the orthogonality hypersurface and a diagrammatic notation to help show that the the tensors satisfy the isometry criteria and that the resulting wavefunction is unchanged. In Sec.~\ref{sec:finite_depth} we build on this result to show that all states which may be transformed in the string-net liquid fixed points by local unitary circuits of finite depth also admit exact isoTNS representations of finite bond dimension. The main observation used is that by applying a coarse-graining and fine-graining transformation on the tensors, one can restore any isometry condition after application of a unitary operation. Finally, we conclude with speculation about the implications of this result as well as future directions and open questions. The Appendices in this paper collect the notations, constructions, and identities relevant for this work as well as technical details and examples.

\section{\label{sec:isoTNS_intro}Isometric Tensor Network States}
The isoTNS ansatz is a subclass of tensor network states satisfying particular isometry constraints that allow for efficient computation of expectation values of observables. This ansatz lifts the concept of ``canonical forms" used for MPS to tensor networks of higher dimension. Readers unfamiliar with MPS may consult App.~\ref{app:isoTNS} for a more detailed account of the connection between isoTNS and MPS.

We now review the terminology, define the PEPS ansatz, and state the isometry constraints that define the isoTNS ansatz for 2D states.

\begin{figure}
    \centering
    \subfigure[]{
        \includegraphics[scale=0.45]{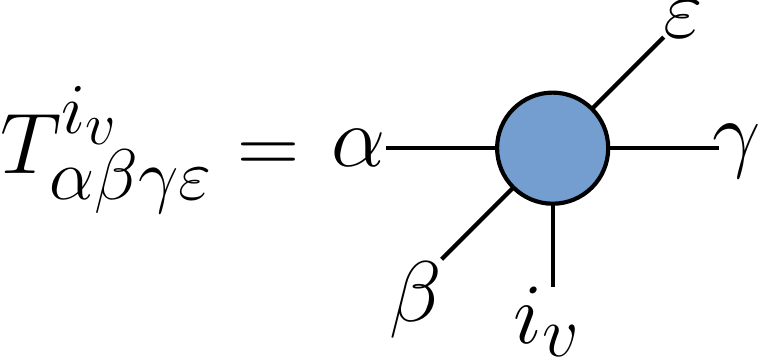}
        }
    \quad
    \subfigure[]{
        \includegraphics[scale=0.45]{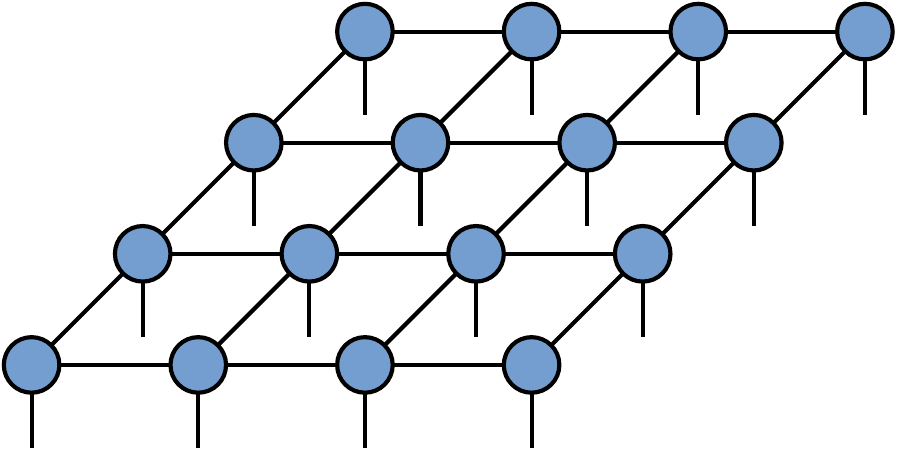}
    }
    \quad
    \caption{\label{fig:peps_tensors}(a) Rank-5 bulk tensor. (b) PEPS ansatz. Lines connecting tensors indicate contractions along ancilla indices. The entire tensor network forms a large tensor with only physical indices uncontracted, represented in the text as $\mathcal{C}(\{T^{i_v}_{\alpha\beta\gamma\varepsilon}\})$.}
\end{figure}

We consider a square lattice of dimensions $L_x \times L_y$ with open boundary conditions. Each vertex $v$ is addressed by row/column coordinates $(r,c)$ and carries a physical degree of freedom with Hilbert space dimension $d$. Next, each vertex in the bulk ($2\leq r \leq L_y-1$, $2\leq c \leq L_x-1$) has associated to it a rank-5 tensor $T^{i_v}_{\alpha \beta \gamma \varepsilon}$ where $i_v$ is referred to as a ``physical index" ($0\leq i_v \leq d-1$) and $\alpha,\beta,\gamma,\varepsilon$ are referred to as ``ancilla indices" $1\leq\alpha,\beta,\gamma,\varepsilon \leq\chi$ where $\chi$ is referred to as the ``bond dimension". Tensors along the edge of the lattice are similarly denoted and are generally rank-4 or rank-3 (corners). Graphically, tensors are indicated by solid shapes with lines emerging for each index as in Fig.~\ref{fig:peps_tensors} (a). Lines which connect two tensors indicate contraction of the two tensors along a particular index. Tensors are arranged and connected as in Fig.~\ref{fig:peps_tensors} (b) to form the tensor network, where the tensor at each site is contracted with those of its nearest neighbors along ancilla indices, leaving only the physical index uncontracted.

The resulting tensor network defines the wavefunction as follows:

\begin{equation}
    \ket{\psi} = \sum_{\vec{i}}\mathcal{C}(\{T^{i_v}_{\alpha\beta\gamma\varepsilon}\}) \ket{\vec{i}}
\end{equation}
where $\{T^{i_v}_{\alpha\beta\gamma\varepsilon}\}$ indicates the set of tensors associated to the vertices, and $\mathcal{C}$ indicates tensor contraction according to the network. This construction defines the unconstrained PEPS ansatz for quantum many-body wavefunctions.

We now impose so-called isometry constraints to obtain a subclass which we refer to as the 2D isoTNS ansatz. A tensor $T^{i_v}_{\alpha \beta \gamma \varepsilon}$ as in Fig.~\ref{fig:peps_tensors} (a) may be viewed as a map from a tensor product of three Hilbert spaces associated to the indices $i_v,\alpha,\beta$ of dimensions $d,\chi$, and $\chi$, respectively, to a tensor product of two Hilbert spaces associated to the indices $\gamma, \varepsilon$, both of dimension $\chi$. $T^{i_v}_{\alpha \beta \gamma \varepsilon}$ is an isometry from the Hilbert space of $i_v,\alpha,\beta$ to the Hilbert space of $\gamma, \varepsilon$ if
\begin{equation}
    (T^{i_v}_{\alpha \beta \gamma' \varepsilon'})^*T^{i_v}_{\alpha \beta \gamma \varepsilon} = \delta_{\gamma, \gamma'}\delta_{\varepsilon, \varepsilon'}
\end{equation}
where repeated indices are implicitly summed. Graphically, this condition is denoted by placing incoming (outgoing) arrows on the lines corresponding to the $\alpha,\beta$ ($\gamma,\varepsilon$) indices. The line for the physical index $i_v$ always has an incoming arrow which is omitted for visual clarity.

An isometric representation of the tensor network is defined by choosing a row and column of $L_x + L_y - 1$ tensors, which we refer to as an ``orthogonality hypersurface", and demanding that all tensors not on the orthogonality hypersurface are isometric with two outgoing legs pointing to the chosen row and column (row 5, column 4 in Fig.~\ref{fig:2D_isoTNS}). The entire orthogonality hypersurface is then connected to the remainder of the network only by incoming arrows. The orthogonality hypersurface itself has one tensor with only incoming arrows, which we refer to as the ``orthogonality center".

The key point is that the orthogonality hypersurface divides the lattice into four regions, such that the tensor network within each region is an isometry from the region to the ancilla space of the orthogonality hypersurface. Put differently the orthogonality hypersurface is the boundary between different isometry directions. This constraint ensures that in order to evaluate the expectation value of an observable supported on the orthogonality hypersurface, one need not explicitly or approximately contract tensors off the orthogonality hypersurface---the isometry conditions guarantee that such contractions yield the identity. The 2D isoTNS ansatz is then defined by demanding that for all choices of row and column, such an isometric representation of the wavefunction exists.

These isometry conditions generalize the canonical forms of MPS described in App.~\ref{app:isoTNS} which similarly admit efficient evaluation of observables. Indeed, the orthogonality hypersurface in 2D isoTNS itself may be viewed as an MPS with physical dimension $d\chi^2$ which can be put in mixed-canonical form to create an orthogonality center as in Fig.~\ref{fig:2D_isoTNS}. The orthogonality center can then be moved throughout the orthogonality hypersurface as in MPS. 

Of course, it is also necessary that, as with the orthogonality center in the 1D MPS ansatz, one needs to be able to move the orthogonality hypersurface around efficiently. This point is discussed further in Ref.~\cite{Zaletel19}. Unlike in the case of MPS, however, it is not yet understood what quantum phases may be represented exactly or even approximately by this ansatz. In this work, we show that string-net liquids and states which may be transformed into them by quantum circuits of finite depth can be represented exactly as an isoTNS.

\begin{figure}
\includegraphics[scale=0.60]{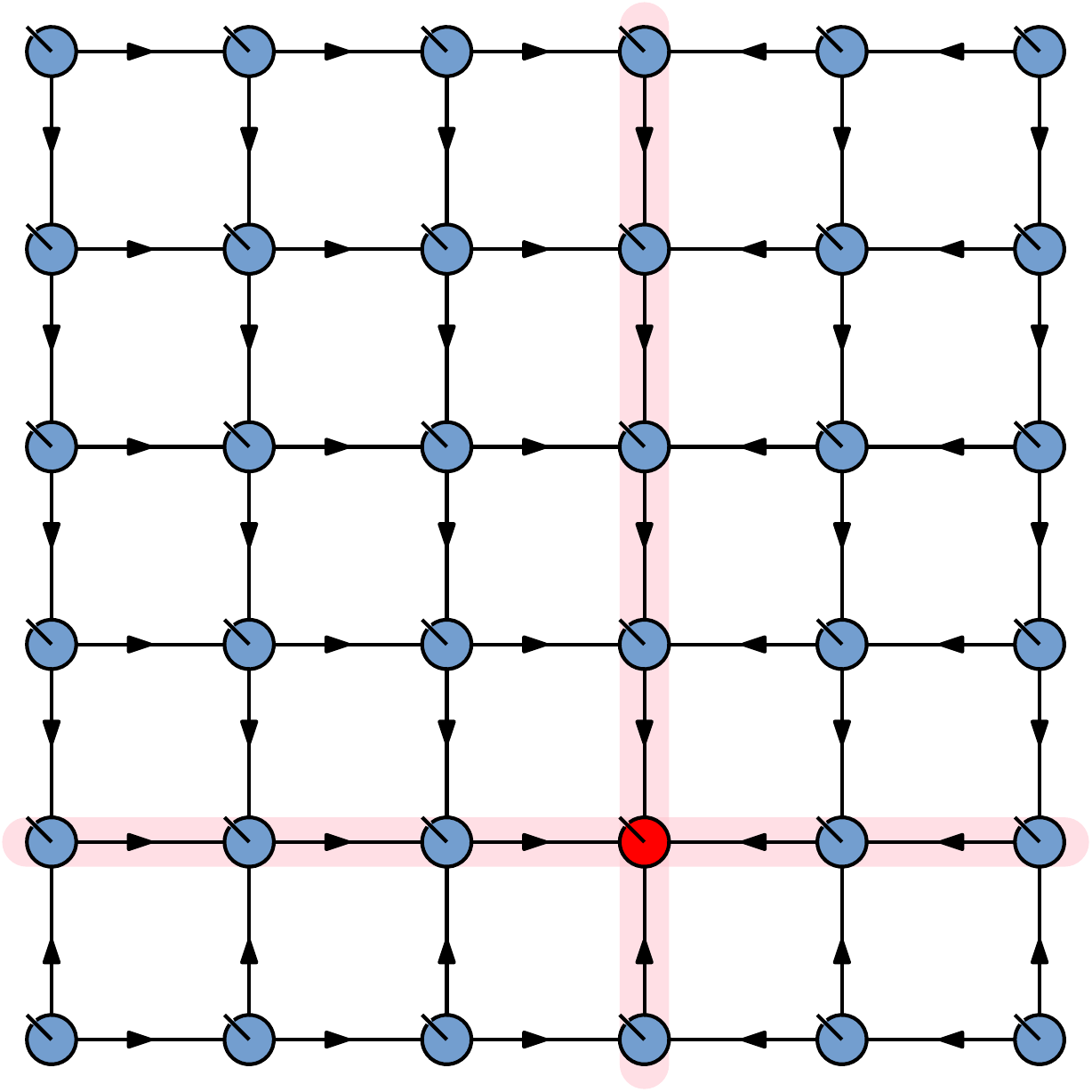}
\caption{\label{fig:2D_isoTNS}2D Isometric Tensor Network. Each tensor has an uncontracted physical index with an implicit incoming arrow. The orthogonality hypersurface is formed by the tensors in highlighted in row 5 and in column 4. The orthogonality center is the tensor marked red at the intersection of the row and column. The orthogonality hypersurface divides the lattice into four bulk regions. The isometry conditions imposed imply that the tensor formed by contracting all internal indices in each bulk subregion is an isometry from the physical indices in the bulk to the ancilla indices in the orthogonality hypersurface.}
\end{figure}

\section{\label{sec:numerics}Numerics}
We motivate the study of string-net liquids as isometric tensor networks with a numerical calculation of a well-studied example, the toric code on a square lattice \cite{Kitaev03}. 
A spin-1/2 degree of freedom, labeled by $\ket{0},\ket{1}$ (up/down spins in the $z$ basis), is assigned to each edge of the lattice. The Hamiltonian reads

\begin{equation}
\begin{split}
    H = -\sum_{s}A_s - \sum_{p}B_p \\
    A_s = \prod_{i \in s} \sigma^z_i\ ,\quad B_p = \prod_{i \in p} \sigma^x_i
\end{split}
\label{eq:toric_code_H}
\end{equation}
where $\sigma^x$ and $\sigma^z$ are  Pauli matrices. The first sum is over products of four spins on ``stars'' (four spins adjacent to a vertex) and the second sum is over products of spins on ``plaquettes'' on the lattice. This model is exactly solvable, and its ground state exhibits $\mathbb{Z}_2$ topological order. See Ref.~\cite{Kitaev03} for details regarding the ground state wavefunctions.

\begin{figure}
 \includegraphics[width=8cm]{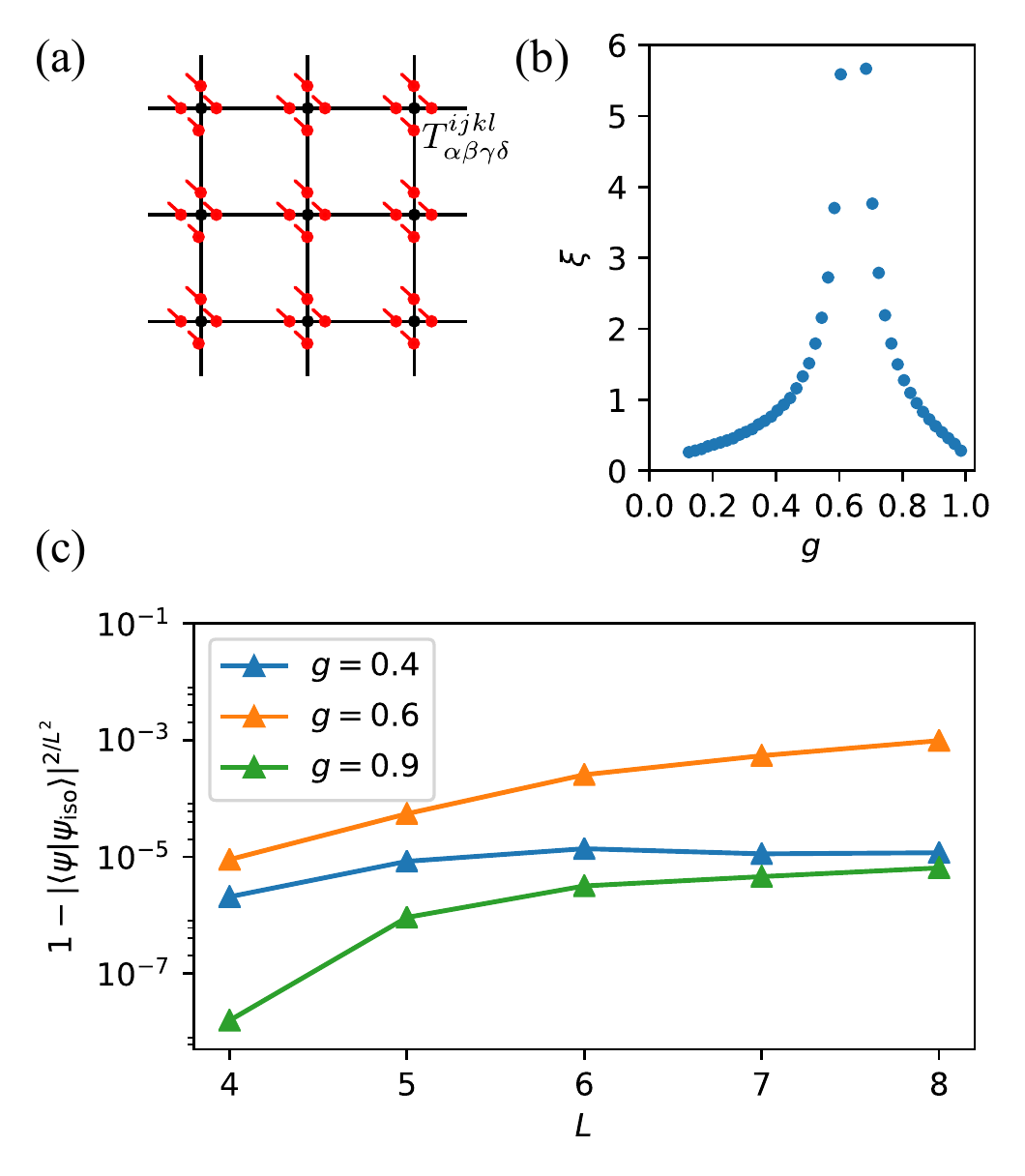} 
 \caption{\label{fig:tensor_toric} 
 (a) PEPS representation of the toric code state.
 (b) Correlation length as extracted from the transfer matrix of the boundary MPS 
\cite{Jordan08} of the perturbed toric code state. 
 (c) Error density as function of the linear system size for an isoTNS using a bond dimension $\chi=4$ and a zero-column state (an orthogonality hypersurface without physical indices used for the compression, see Ref.~\cite{Zaletel19} for details) with bond dimension $\eta=12$.
 }
\end{figure}

The ground state of the toric code can be represented exactly as a PEPS with a bond dimension $\chi=2$. To see this, we first double the local Hilbert space on each edge and take $|i\rangle \rightarrow |ii\rangle$ in the $z$ basis as illustrated in Fig.~\ref{fig:tensor_toric}(a). We then associate to each vertex the local Hilbert space of the four nearest spins. As in Fig.~\ref{fig:peps_tensors}, we assign a tensor $T$ to each vertex but now with four physical indices ($i$, $j$, $k$, $l$) corresponding to the states of the four spins and four ancilla indices ($\alpha$, $\beta$, $\gamma$, $\delta$). We define $x = i+j+k+l$, and choose $T$ such that 
\begin{eqnarray}
T^{ijkl}_{\alpha\beta\gamma\delta}=
\left\{
    \begin{array}{ll}
       \delta_{i,\alpha}\delta_{j,\beta}\delta_{k,\gamma}\delta_{l,\delta} &\mathrm{if}\ x=0 \ \mathrm{mod}\ 2\\
      0 &\mathrm{otherwise.}
    \end{array}
  \right.
  \label{eq:exact_toric_PEPS}
\end{eqnarray}
The state that results after the contraction of the PEPS is the ground state of the Hamiltonian in Eq.~\ref{eq:toric_code_H}.

In the following, we consider a generalization of the PEPS introduced above, motivated by the physics of a toric code in the presence of a magnetic field. Note that the only non-zero element in a PEPS of a completely polarized state in $z$-direction is $T^{0000}_{0000}=1$, and all other elements are zero. Although the actual ground state of the toric code in a finite field does not have a simple PEPS representation, we can construct a PEPS interpolating between the two fix point states, parameterized by a value $g$ \cite{Chen10-2}:

\begin{eqnarray}
T^{ijkl}_{\alpha\beta\gamma\delta}(g)=
\left\{
    \begin{array}{ll}
       g^{x/2}\delta_{i,\alpha}\delta_{j,\beta}\delta_{k,\gamma}\delta_{l,\delta}  & \mathrm{if}\ x=0 \ \mathrm{mod}\ 2\\
      0  &\mathrm{otherwise}
    \end{array}
  \right.
  \label{eq:perturbed_toric_PEPS}
\end{eqnarray}

This state exhibits a phase transition between the topologically trivial ($g < g_\ast$) and the $\mathbb{Z}_2$ topologically ordered  ($g > g_\ast$) phase at a critical point $g_\ast$ where the correlation length diverges as shown in Fig.~\ref{fig:tensor_toric} (b). We will now use this family of PEPS to investigate the representability of general states in terms of isoTNS. 

For a system with open boundary conditions, we choose the boundary spins (the outermost spins in Fig.~\ref{fig:tensor_toric} (a)) to be polarized in $x$-directions. This corresponds to the rough boundary condition, as defined in Ref.~\cite{Kong11}. With this boundary condition, the PEPS is in an isometric form for $g=0$ and $g=1$. Moving away from the fixed points, however, the PEPS no longer has this property. An interesting question, then, is whether the state can still be efficiently approximated by an isoTNS. 

We start from the exact $\chi = 2$ PEPS representation, and sweep right column-by-column to put the state into isometric form. We do so by using repeated variational sweeps to maximize the fidelity of the isoTNS. The results are shown in  Fig.~\ref{fig:tensor_toric} (c). Away from the critical point, the error density becomes independent of the system size once $L\gg \xi$, indicating that the PEPS can be efficiently represented even in the thermodynamic limit. Near the critical point, the error density keeps growing as the system size increases. Note that at $g=0$ and $g=1$, the algorithm finds perfect overlap which we do not show here.

These numerical results suggest that phases with topological order may be captured by the isoTNS ansatz. In the remainder of this paper, we use the framework of string-net liquids and finite-depth circuits to make this statement precise, and rigorously prove that a wide class of topologically ordered states can be represented by isoTNS ansatz.

\section{\label{sec:string_net_tensors}String-net wavefunction}
\begin{figure}
    \centering
    \includegraphics[scale=0.45]{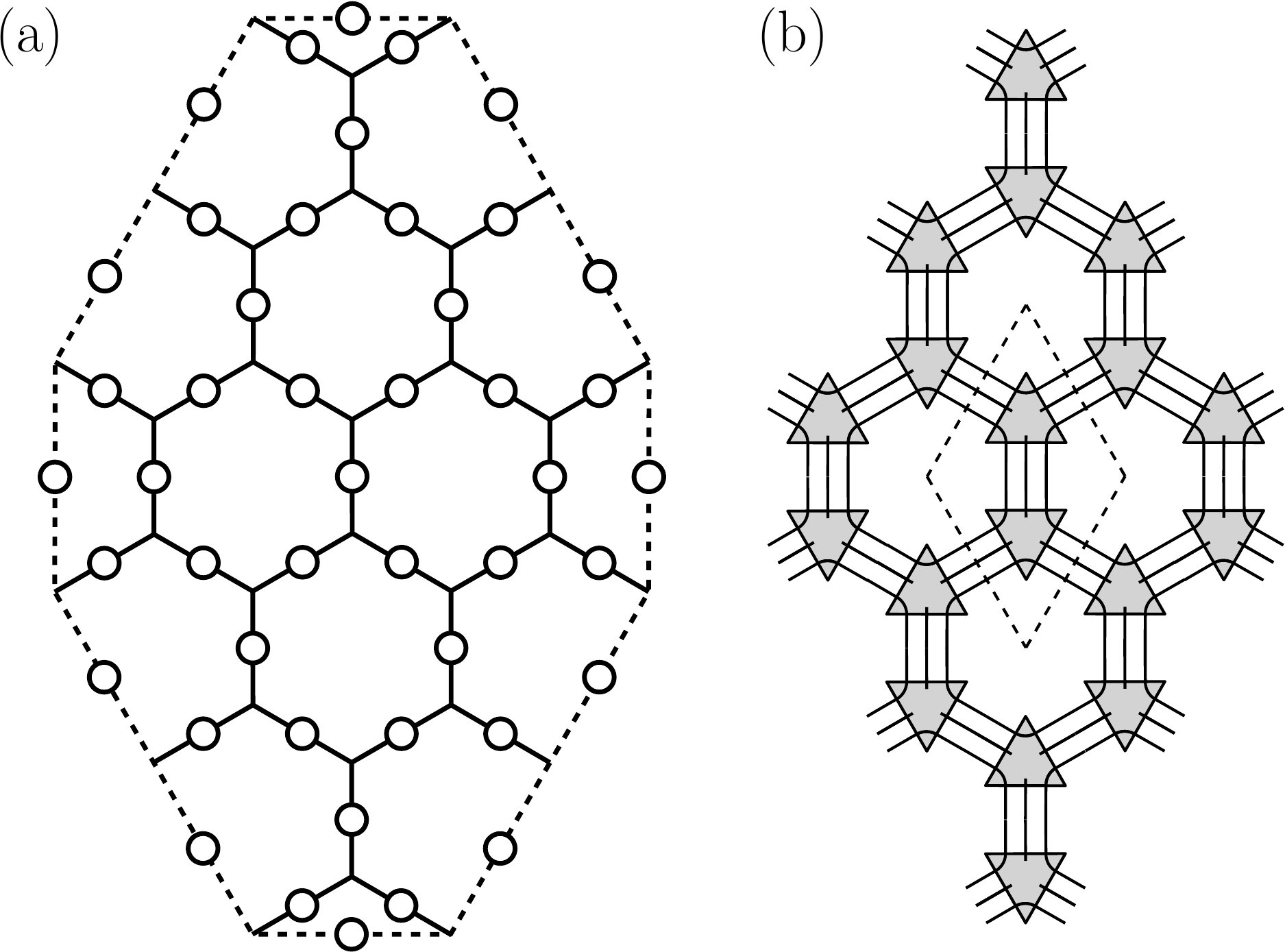}
    \caption{\label{fig:honeycomb_lattice} (a) Honeycomb lattice on which string-net liquid wavefunction is defined. Degrees of freedom are represented by empty circles. Solid line represents the bulk and dashed line represents edge of the system. The construction for the wavefunction and a PEPS representation is given in App.~\ref{app:string_net_liquid}. (b) Schematic representation of the PEPS for the same honeycomb lattice. Internal loops correspond to auxiliary degrees of freedom, while open lines correspond to physical degrees of freedom. The triangles represent tensors as defined in Eq.~\ref{eq:original_string_net_tensors} and the unit cell is denoted by dashed lines. Black and white dots, boundary conditions etc. are omitted for simplicity.}
\end{figure}

\begin{figure}
    \centering
    \includegraphics{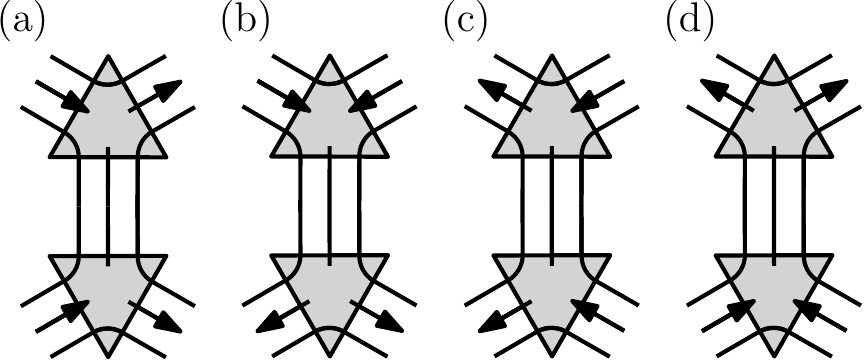}
    \caption{Schematic representation of the four isometry directions for the unit cell of a honeycomb lattice. Arrows indicate the isometry directions.}
    \label{fig:unit_cell_isometry}
\end{figure}

String-net models are exactly solvable lattice models, whose ground state wavefunctions can be thought of as fixed-point wavefunctions of topological phases of matter. They provide concrete lattice realizations of various topological phases, and are therefore suitable for understanding the ability of isoTNS to represent topological phases.

A 2D string-net wavefunction is defined on a trivalent lattice where degrees of freedom live on the edges of the lattice (Fig.~\ref{fig:honeycomb_lattice} (a)). The local Hilbert space is spanned by $N+1$ basis states $i={0, 1, ..., N}$ that correspond to ``string types'', each associated with a positive number $d_i > 0$ known as that string type's quantum dimension. The strings are allowed to ``branch'' according to ``branching rules'' or ``fusion constraints'' $\delta_{ijk}$ which is 1 if string types $ijk$ can meet at a vertex and 0 otherwise. The ground state wavefunction is characterized by a set of graphical rules relating different string configurations. A central object in that graphical rule is a six-index tensor called the $F$-symbol, denoted $F^{ijm}_{kln}$, which satisfies certain compatibility requirements. Given $F$-symbols and string types, we can construct an exactly solvable projector Hamiltonian, the ground states of which satisfy the graphical rules (see App.~\ref{app:string_net_liquid} for a more complete treatment of string-net liquid and $F$-symbols).

It has been shown that the string-net ground state wavefunctions have an exact tensor network representation defined in terms of $F$-symbols \cite{Gu2009, Buerschaper2009}. The tensor network is easiest to depict after doubling the degrees of freedom on edges such that each vertex has three physical degrees of freedom associated to it. The network is depicted schematically in Fig.~\ref{fig:honeycomb_lattice} (b) where tensors on upper (lower) sublattice are depicted by solid triangles pointing up (down).

The precise tensor network depends on a choice of branching structure, but one particularly simple representation can be obtained by using only two types of tensors in the bulk of the honeycomb lattice, such that all bulk tensors on a given sublattice are the same. The bulk tensor for the upper sublattice sites is given by the following:

\begin{equation}
    \vcenter{\hbox{\includegraphics[width=0.35\linewidth]{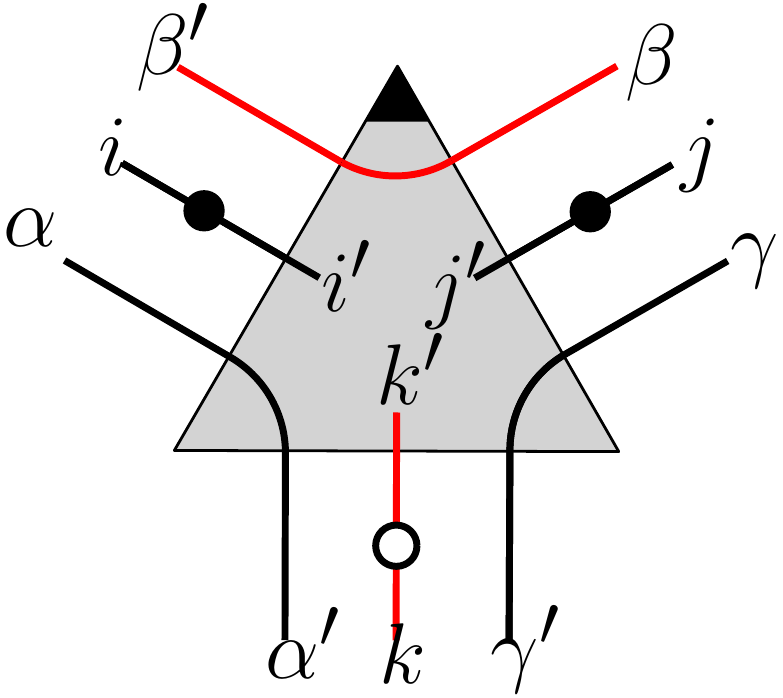}}} = 
    \begin{array}{l}    \delta_{\alpha\alpha'}\delta_{\beta\beta'}\delta_{\gamma\gamma'} \\
    \delta_{ii'}\delta_{jj'}\delta_{kk'}
    \end{array}\sqrt{\frac{v_iv_j}{v_k}}F^{i\alpha^*\beta}_{\gamma j k}
    \label{eq:full_tensor_notation}
\end{equation}
where $v_i = (F^{ii^*0}_{ii^*0})^{-1/2}$ and we have introduced graphical notations for Kronecker delta and factors of $\sqrt{v_i}$ on the LHS (see Fig.~\ref{fig:graphical_notations}). The $v_i$'s satisfy $v_i \times v_i^* = d_i$, and in the remainder of the text we will loosely refer to these as factors of quantum dimension. Each leg of the tensor takes $N+1$ possible values, corresponding to the number of string types. The indices $i',j',k'$ label the three physical degrees of freedom, while the remaining indices label the ancilla legs. The grey triangle represents the $F$-symbol and the red legs indicate unitary legs as discussed in Eq.~\ref{eq:F_unitary_in_subspace}.

We note that even though the tensor has twelve indices, for brevity we sometimes write these tensors with six indices, making the Kronecker deltas, like those in Eq.~\ref{eq:full_tensor_notation}, implicit. With this convention, the tensors for two sublattices are given by the following:

\begin{equation}
\begin{alignedat}{3}
    \vcenter{\hbox{\includegraphics[width = 0.4\linewidth]{figures/up_triangle.pdf}}} &= \sqrt{\frac{v_i v_j}{v_k}} F^{i\alpha^* \beta}_{\gamma j k}\ &\text{(upper)}\\
    \vcenter{\hbox{\includegraphics[width=0.4\linewidth]{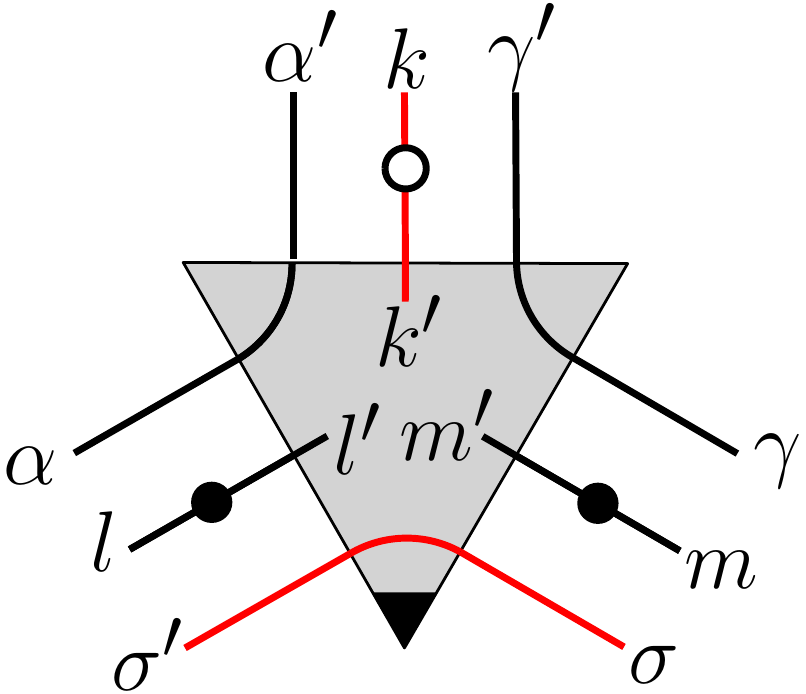}}}  &=  \sqrt{\frac{v_l v_m}{v_k}} (F^{l \alpha^* \sigma}_{\gamma m k})^*\ &\text{(lower)}
\end{alignedat}
    \label{eq:original_string_net_tensors}    
\end{equation}
This gives us the tensors in the bulk. However, edge tensors, the outermost tensors directly connected to the open indices on the edges and corners of the system, take different forms. Further details on the tensor network construction of string-net ground states, including how to construct the edge tensors, are given in App.~\ref{app:string_net_liquid}.

Although we have defined isoTNS ansatz only for square lattice in Sec.~\ref{sec:isoTNS_intro}, the honeycomb lattice can be thought of as a square lattice if we consider the unit cell as shown in Fig.~\ref{fig:honeycomb_lattice} (b). Each unit cell has four neighboring unit cells, and therefore the honeycomb lattice can be mapped to a square lattice. The tensor for the unit cell (``unit cell tensor'') can be constructed by putting the tensors for upper and lower sublattices together (Fig.~\ref{fig:five_tensors} (e)). As in Fig.~\ref{fig:2D_isoTNS}, there are four isometry directions for the unit cell tensor, which we label (a-d) in Fig.~\ref{fig:unit_cell_isometry}. We will use this labeling for isometry directions throughout the main text (Figs.~\ref{fig:five_tensors}, \ref{fig:four_isometries},\ref{fig:isometric_and_orthogonal}).

Our present work shows that the string-net unit cell tensors can be put in an isometric form by exploiting tetrahedral symmetries of the $F$-symbols and gauge redundancies in the tensor network description.

\begin{figure*}
    \centering
    \includegraphics[width=\linewidth]{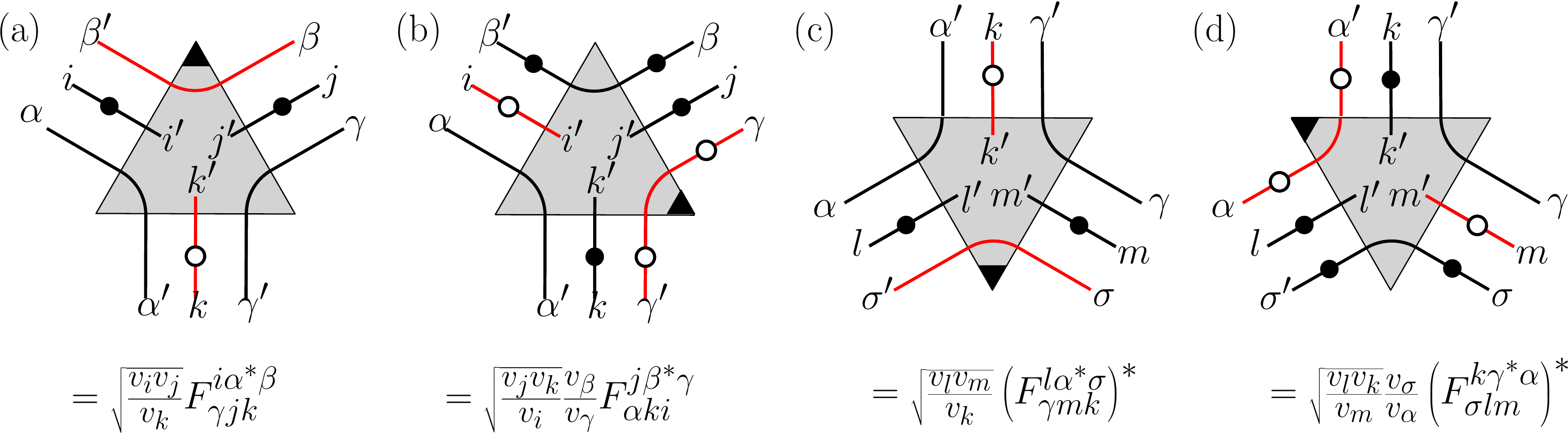}
    \caption{Graphical notation for string-net tensors. The Kronecker deltas that come with solid lines are suppressed. Black triangles shows the unitary direction and red lines indicate unitary legs. Tensors (a) and (c) represent the original ``first form'' tensors in Eq.~\ref{eq:original_string_net_tensors}. Tensors (b) and (d) are the ``second form'' tensors, obtained using Eq.~\ref{eq:convenient_tetrahedral_constraints}.}
    \label{fig:tensor_definition}
\end{figure*}

\begin{figure}
    \centering
    \includegraphics[width = \linewidth]{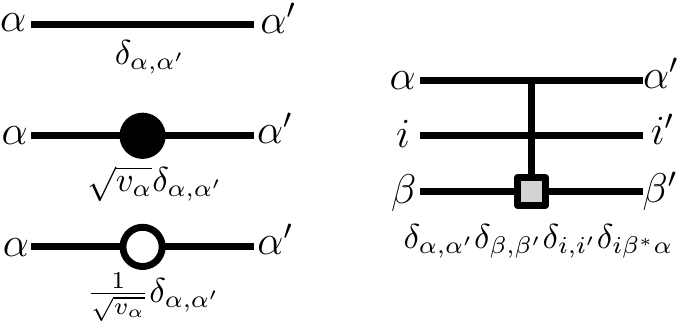}
    \caption{Some graphical notations used throughout the text. The light gray box for the fusion constraint indicates the other two indices fuse to the gray index.}
    \label{fig:graphical_notations}
\end{figure}

\section{\label{sec:string_isometry}Transforming string-net tensors into isometric tensors}

In this section, we construct an isometric form for the string-net tensors. The main strategy, which is outlined by a sequence of figures, is as follows. We will use the ability to rewrite the tensors using tetrahedral symmetry (Fig.~\ref{fig:tensor_definition}) as well as the ability to perform gauge transformations in order to put the tensors in a desired form (Figs.~\ref{fig:quantum_dims} and~\ref{fig:five_tensors}). Then, we will ``strip off'' some of the fusion constraints attached to $F$-symbols (Figs.~\ref{fig:full_rank_tensors} and~\ref{fig:four_isometries}) and place them all at the orthogonality center (Fig.~\ref{fig:isometric_and_orthogonal}). With the help of a graphical notation, we can show these tensors are isometric and equivalent to the original tensor network (Figs.~\ref{fig:fusion_moves} and~\ref{fig:isometry_proof}). Finally, in App.~\ref{app:entanglement_spectrum}, we demonstrate that these isometric tensors provide an easy method for calculating the entanglement spectrum.

Note that the proof in main text covers string-net liquid as defined in \cite{Levin05}. Its generalizations as defined in \cite{Lin14} need a different treatment, as discussed in App.~\ref{app:abelian_isometry}.

\subsection{\label{sec:f_identities}\texorpdfstring{$F$}{F}-symbol identities}
We briefly summarize some properties of the $F$-symbols that are necessary for constructing the new tensors and proving their isometry. We show in App.~\ref{app:string_net_liquid} that these properties are satisfied by isotopy-invariant string-nets. First, $F$-symbols satisfy so-called ``branching rules'' or ``fusion constraints'':

\begin{equation}
    F^{i\alpha^*\beta}_{\gamma j k} = F^{i\alpha^*\beta}_{\gamma j k} \delta_{i\alpha^*\beta} \delta_{j\beta^*\gamma} \delta_{k\gamma^*\alpha} \delta_{ijk}
    \label{eq:fusion_constraints}
\end{equation}
Next, as a result of tetrahedral symmetries, the $F$-symbols satisfy the following equalities:

\begin{equation}
    \sqrt{\frac{v_i v_j }{v_k}}F^{i\alpha^*\beta}_{\gamma j k} = \sqrt{\frac{v_k v_i}{v_j}}\frac{v_\beta}{v_\alpha}F^{k\gamma^*\alpha}_{\beta i j} = \sqrt{\frac{v_j v_k}{v_i}}\frac{v_\beta}{v_\gamma}F^{j\beta^*\gamma}_{\alpha k i}
    \label{eq:convenient_tetrahedral_constraints}
\end{equation}
This allows us to write the expression in Eq.~\ref{eq:original_string_net_tensors} in different ways. Graphically, we observe Fig.~\ref{fig:tensor_definition} (a) and (c) are the same tensors as Fig.~\ref{fig:tensor_definition} (b) and (d) thanks to Eq.~\ref{eq:convenient_tetrahedral_constraints}. In the remainder, we will refer to the former form of tensors as ``the first form'' and the latter forms of tensors as ``the second form''. The second form has different factors of quantum dimensions, and its indices for $F$-symbols are permuted from that for the first form.

Next, the string-net $F$-symbols are unitary matrices when restricted to the subspace satisfying fusion constraints: 

\begin{equation}
    \sum_{k=0}^N (F^{i\alpha^*\beta}_{\gamma jk})^*F^{i\alpha^*\beta'}_{\gamma jk} = \delta_{\beta,\beta'}\delta_{i\alpha^*\beta}\delta_{j\beta^*\gamma}
    \label{eq:F_unitary_in_subspace}
\end{equation}
We call the indices $\beta, k$ ``unitary indices'' and corresponding legs for the tensor ``unitary legs''. Unitary legs are indicated by red in graphical notation (Fig.~\ref{fig:tensor_definition} (a), (c)). Note that the unitary legs change when we rewrite the tensors using tetrahedral symmetry (Fig,~\ref{fig:tensor_definition} (b), (d)). This unitarity property will be the starting point of our construction.

Finally, the quantum dimensions are given by $d_i = |v_i|^2$ and satisfy the following relation: 

\begin{equation}
    \sum_k \delta_{ij k^*} d_k = d_i d_j
    \label{eq:branching_frobenius_perron}
\end{equation}

\subsection{\label{sec:quantum_dims}Gauge transformations}

The unit cell tensor given in Fig.~\ref{fig:five_tensors} (e), is not a unique choice, since we can perform gauge transformations. A particular choice of gauge transformations is multiplication by $\sqrt{v_\alpha}$ (or its inverse). We can use this to change factors of $\sqrt{v_\alpha}$ for each unit cell tensor. Graphically, this can be represented by moving black (white) dots along solid lines and canceling them with each other when necessary. We can do a similar graphical manipulation withing a unit cell to move black (white) dots on solid lines to simplify the tensor. Since each solid line represents a Kronecker delta, this does not change the unit cell tensor. We utilize these operations, as well as tetrahedral symmetry (Eq.~\ref{eq:convenient_tetrahedral_constraints}) to change the unit cell tensor as in Fig.~\ref{fig:quantum_dims}.
We subsequently observe that one of the Kronecker deltas on each inner loop is redundant, and cut one leg at the top right/bottom left (indicated by dashed line). This ensures that the value of the top-right/bottom-left index does not depend on the value of bottom-right index. This results in the tensors in Fig.~\ref{fig:five_tensors}.

\begin{figure}
    \centering
    \includegraphics[width=\linewidth]{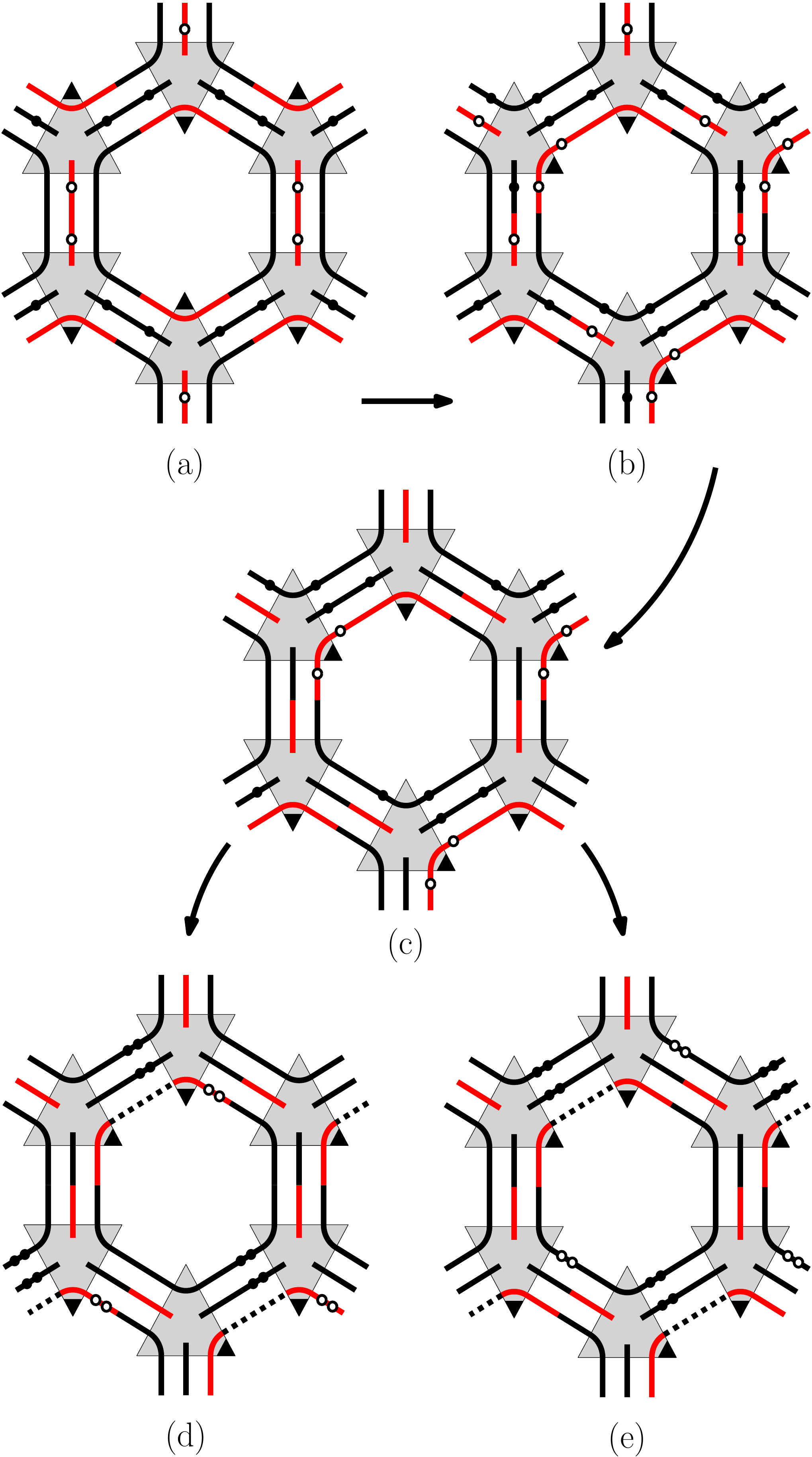}
    \caption{\label{fig:quantum_dims}Illustration of different forms of the PEPS tensor and application of gauge transformation. Arrows indicate the order of the steps. (a) The PEPS in terms of the original tensors in Eq.~\ref{eq:original_string_net_tensors}. (b) The PEPS in terms of one tensor in the second form (b) and one tensor in the first form (c) per unit cell from Fig.~\ref{fig:tensor_definition}. (c) Cancellation of factors of $\sqrt{v_\alpha}$. (d) Moving factors of $\sqrt{v_\alpha}$ and removal of redundant Kronecker delta on a loop to produce tensor (a) in Fig.~\ref{fig:five_tensors}. (e) Same procedure to produce tensor (b) in Fig.~\ref{fig:five_tensors}. Dashed lines are visual aid for removed legs.}
\end{figure}

\begin{figure}
    \centering
    \includegraphics[width = \linewidth, draft=false]{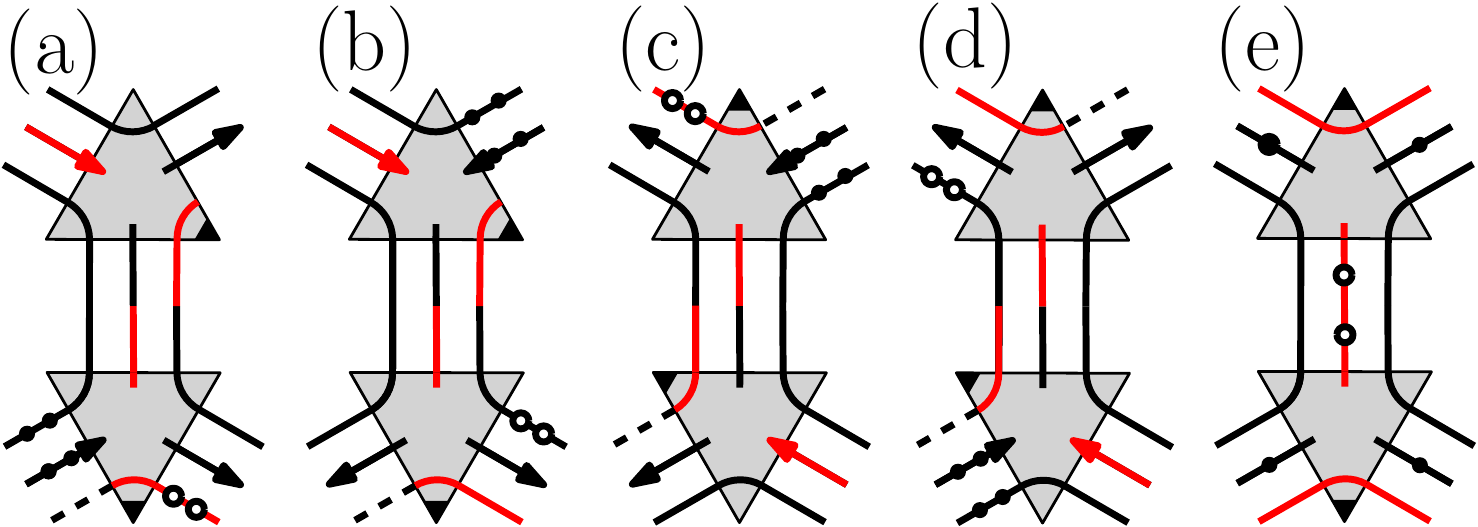}
    \caption{Five tensors that can be transformed to each other by gauge transformations. Examples of these transformations are depicted in Fig.~\ref{fig:quantum_dims}. (a-d) Tensors after gauge transformations and removing a redundant Kronecker delta. See Fig.~\ref{fig:four_isometries} for their isometric counterparts. (e) The original tensor, as in Fig.~\ref{fig:quantum_dims}.}
    \label{fig:five_tensors}
\end{figure}

\subsection{\label{sec:orthogonality_center}Isometry and Orthogonality Center}
Four tensors (a-d) in Fig.~\ref{fig:five_tensors} are almost in an isometric form for isometry directions indicated by arrows. To see this, we perform the tensor contraction and use the unitarity in a subspace (Eq.~\ref{eq:F_unitary_in_subspace}) to evaluate the resulting tensor. The detail of this procedure is omitted, since it is similar to the procedure for contraction of the isometric counterpart, which we will describe in detail in Fig.~\ref{fig:isometry_proof}.

Due to the fusion constraints (Eq.~\ref{eq:fusion_constraints}), the result of the contraction is not an identity, but rather a projector into a subspace that satisfies those fusion constraints. In order to transform this to an isometric form, we need to remove those constraints. We define the ``$A$-symbols" by ``stripping off" the fusion constraints from the $F$-symbols. This can be achieved by lifting the unitarity of $F$-symbols in a constrained subspace (Eq.~\ref{eq:F_unitary_in_subspace}) to the full space. We therefore demand the six-index $A$-symbols satisfy the following conditions:
\begin{equation}
    \begin{aligned}
        F^{ijm}_{kln} = A^{ijm}_{kln} \delta_{ijm} \delta_{klm^*}&\\
        \sum_{n=0}^N (A^{ijm}_{kln})^*A^{ijm'}_{kln} = \delta_{m,m'}&
    \end{aligned}
    \label{eq:full_rank_tensors}
\end{equation}

We show in App.~\ref{app:full_rank_tensors} that for each set of $F$-symbols, we can always construct such $A$-symbols (this construction is not unique). To distinguish between tensors defined in terms of $F$-symbols and those defined in terms of $A$-symbols, we introduce the following terminology. We call the counterpart to the 12-index $F$-symbol tensor the 12-index $A$-symbol tensor depicted in Fig.~\ref{fig:full_rank_tensors} (a). The PEPS tensors built from $F$-symbols will be referred to as ``constrained" PEPS tensors (Fig.~\ref{fig:five_tensors} (a-d)) while those built from $A$-symbols will be referred to as ``full-rank" PEPS tensors (Fig.~\ref{fig:four_isometries}).

The relations in Eq.~\ref{eq:full_rank_tensors}, drawn in terms of the $A$-symbol tensor, are shown graphically in Fig.~\ref{fig:full_rank_tensors} (b), (c), where we again use red to indicate unitary indices. We can graphically see the $A$-symbol tensor is an isometry from two incoming legs to one outgoing leg (Fig.~\ref{fig:full_rank_tensors} (c)).

In terms of these newly defined tensors, the isometric forms for string-net tensors outside of the orthogonality hypersurface are given as in Fig.~\ref{fig:four_isometries}. The orthogonality hypersurface is shown in Fig.~\ref{fig:isometric_and_orthogonal}. As in Fig.~\ref{fig:2D_isoTNS}, tensors outside of the orthogonality hypersurface are isometry from two incoming legs to two outgoing legs, while tensors inside the orthogonality hypersurface are isometry from three incoming legs to one outgoing leg.

\begin{figure}
    \centering
    \includegraphics[width = \linewidth]{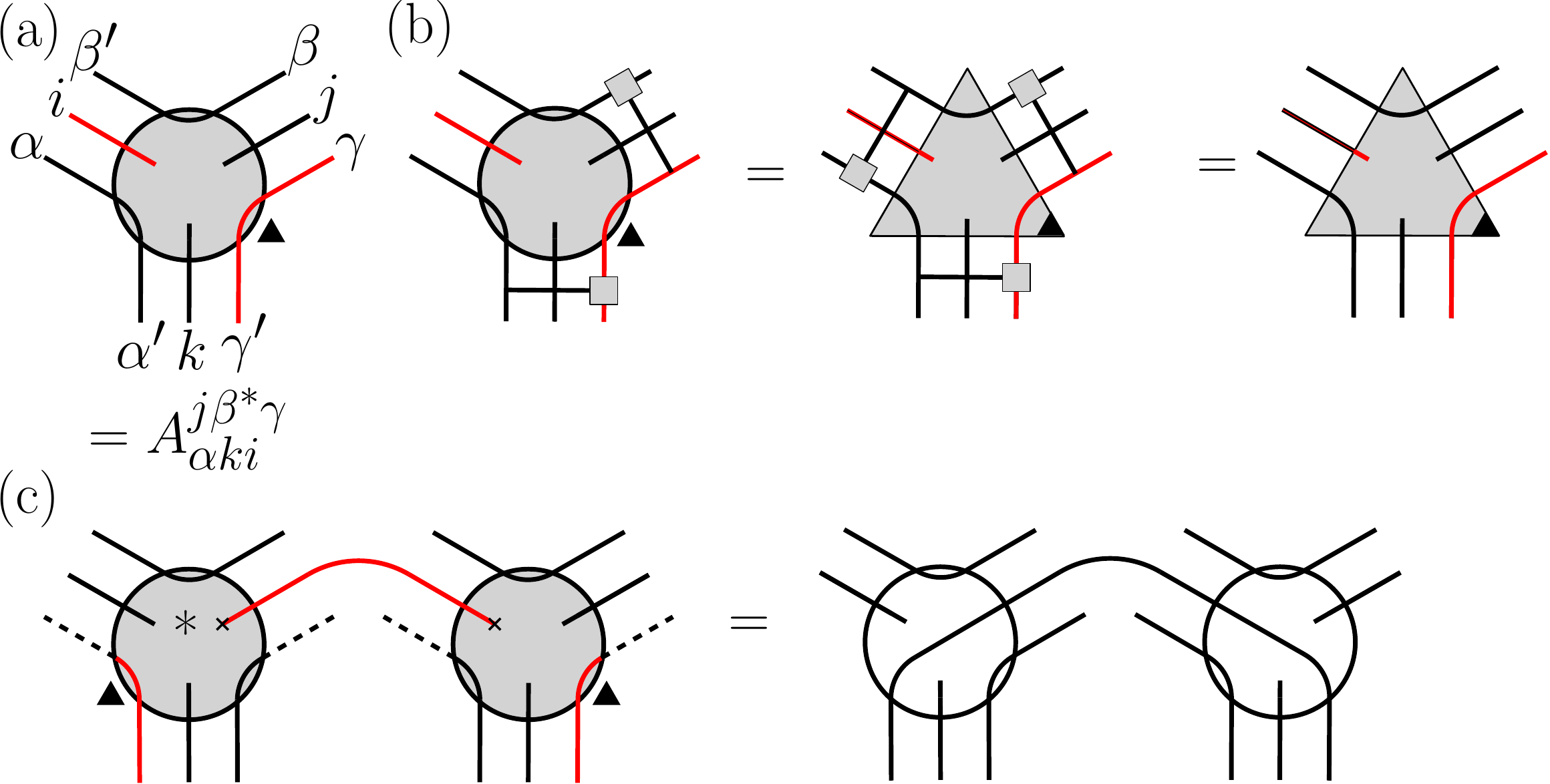}
    \caption{Graphical notations for the $A$-symbol tensor. (a) Graphical notation for the $A$-symbol tensor. Red legs label unitary indices. The black triangle is a reminder that this tensor is an isometry from top-left legs (labeled $\alpha, i, \beta'$) and physical legs (labeling suppressed) to the top-right legs ($\beta, j, \gamma)$ and bottom legs ($\alpha', k, \gamma'$). (b) When acted by fusion constraints on outgoing legs, the $A$-symbol tensor become the $F$-symbol tensor. The second equality follows from Eq.~\ref{eq:fusion_constraints}. (c) Graphical representation of unitarity. When the $A$-symbol tensor is contracted with its complex conjugate on the incoming red unitary leg, it gives a Kronecker delta for the other unitary leg. Empty circles are there for visual clarity, but they do not have any tensor content. By connecting the remaining top-left legs and physical legs, it is easy to see this implies isometry. Note that in order to contract the unitary leg, we need to contract one ancilla index (connected by a line) and one physical index (represented by crosses).}
    \label{fig:full_rank_tensors}
\end{figure}

\begin{figure}
    \centering
    \includegraphics[width=\linewidth]{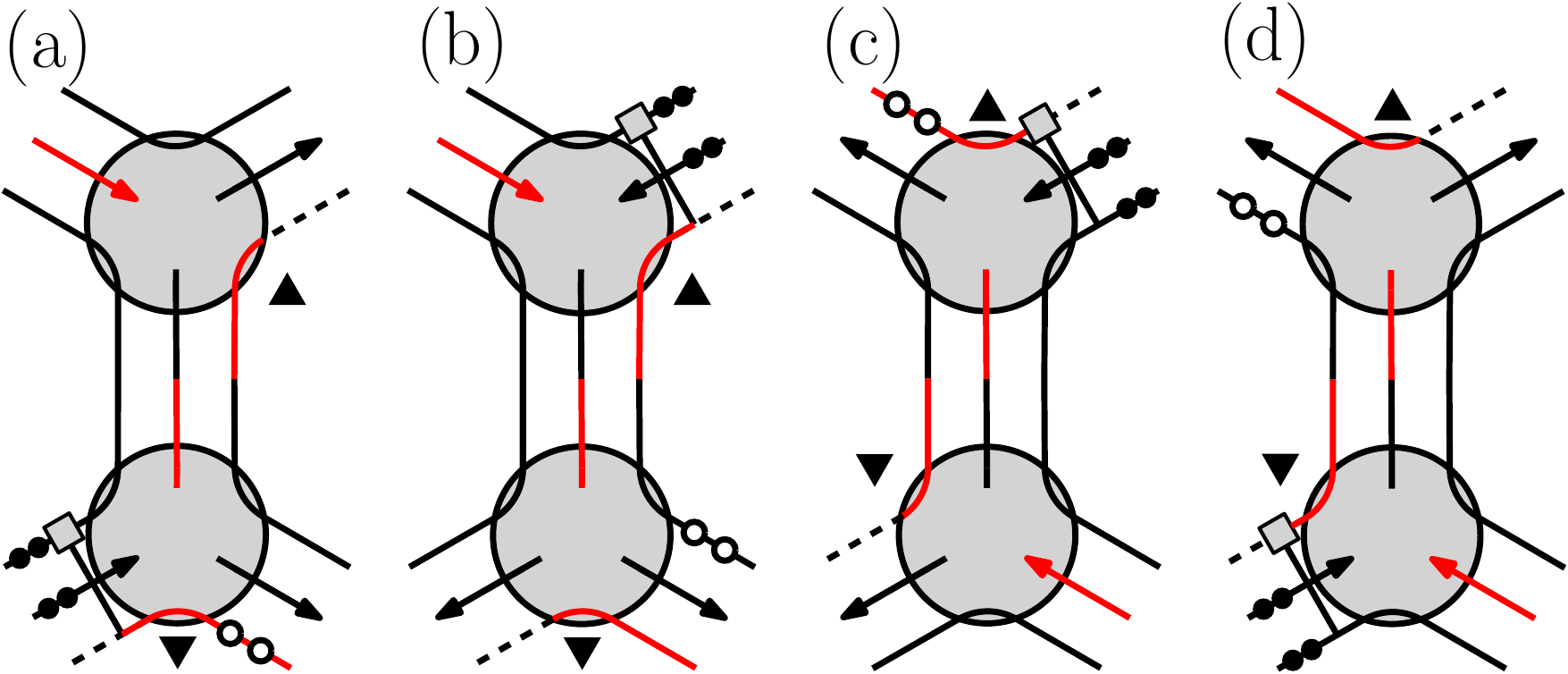}
    \caption{Four isometric tensors built from full-rank tensors. Isometry directions for the tensors are indicated by arrows. Dashed lines serve as an visual aid to indicate removed legs.  These isometric tensors (a)-(d) are obtained by ``stripping off'' fusion constaints from tensors (a)-(d) in Fig.~\ref{fig:five_tensors}.}
    \label{fig:four_isometries}
\end{figure}

\begin{figure}
    \centering
    \includegraphics[draft=false,width=\linewidth]{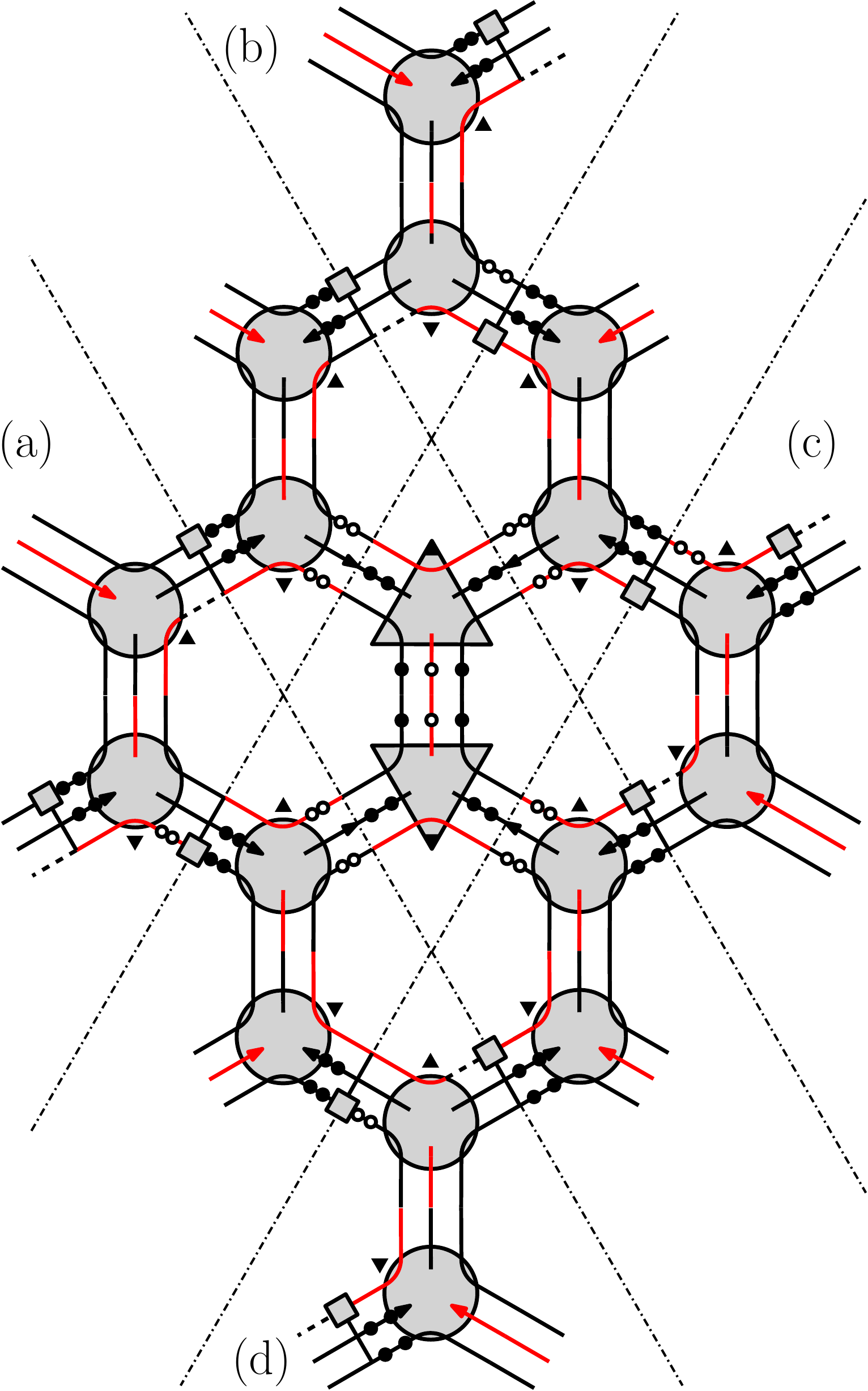}
    \caption{Orthogonality hypersurface and orthogonality center. Dot-dashed lines separate the orthogonality hypersurface from the rest of the network. The four bulk regions are labeled (a-d), corresponding to four isometric tensors defined in Fig.~\ref{fig:four_isometries}. The fusion constraints on the dot-dashed lines live inside the orthogonality hypersurface. Tensors outside of the orthogonality hypersurface are isometry from two incoming legs to two outgoing legs, while tensors inside the orthogonality hypersurface, except for the orthogonality center, are isometry from three incoming legs to one outgoing leg. These tensors are correct even at the outermost edge i.e. the edge tensors carry no extra factors of $v_i$.}
    \label{fig:isometric_and_orthogonal}
\end{figure}

In order to see that this tensor network, defined in terms of full-rank PEPS tensors, is the same as the original tensor network, observe that full-rank PEPS tensors become the constrained PEPS tensors when they are contracted with constrained tensors on outgoing legs (Fig.~\ref{fig:fusion_moves} (a),(c)). We can start this contraction at the orthogonality center and then work our way across the entire network to convert all tensors to the original, constrained tensors (Fig.~\ref{fig:fusion_moves} (e)). 

\begin{figure*}
    \centering
    \includegraphics[width = \linewidth, draft = false]{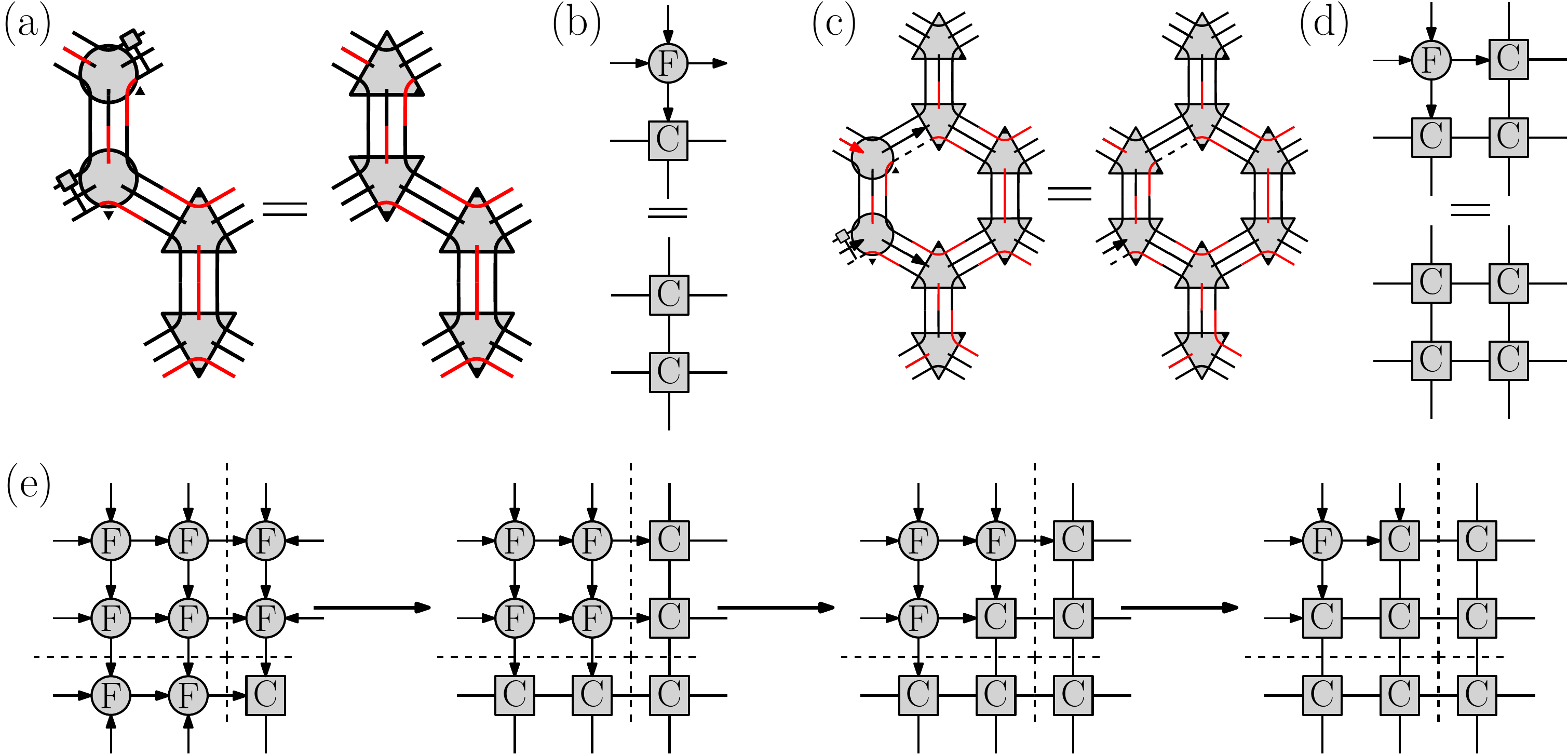}
    \caption{Fusion constraints at the orthogonality center and orthogonality hypersurface propagate through the network via Eq.~\ref{eq:full_rank_tensors}. See also Fig.~\ref{fig:full_rank_tensors} (b) for graphical notation. We omit factors of quantum dimensions for simplicity. (a) Full-rank PEPS tensor in the orthogonality hypersurface becomes constrained PEPS tensor when it is attached to the orthogonality center. (b) Schematic representation of (a) where the square tensors labeled by $F$ ($C$) represent the full-rank (constrained) PEPS tensors.  (c) Full-rank PEPS tensors outside of the orthogonality hypersurface become constrained tensors when attached to the orthogonality hypersurface.  (d) Schematic representation of (c). (e) Using the schematic notation in (b) and (d), we can see how constrained PEPS tensors `propagate' from the orthogonality center all the way to the edge.}
    \label{fig:fusion_moves}
\end{figure*}

\subsection{\label{sec:isometry_proof}Proof of isometry}
Let us prove the four tensors defined in Fig.~\ref{fig:four_isometries} are indeed isometries.
We will focus on one isometry direction, namely tensor (a) in Fig.~\ref{fig:four_isometries}. The proof for other directions and tensors inside the orthogonality hypersurface are done similarly. The main strategy is to use the isometry of the $A$-symbol tensor as described in Fig.~\ref{fig:full_rank_tensors} (c).  The contraction process is best described using graphical methods, as described in  Fig.~\ref{fig:isometry_proof}.

The only nontrivial part of the contraction is the evaluation of the inner loops that appear in Fig.~\ref{fig:isometry_proof} (c), which boils down to evaluating the following sum where $\sigma$ labels the bottom bond and $\alpha, l$ label the inner loops as in the figure:
 
\begin{equation}
    \sum_{\alpha, l} d_\alpha d_l \delta_{\alpha l \sigma^*} = \left(\sum_\alpha d_\alpha^2\right) d_\sigma = \mathcal{D} d_\sigma
\end{equation}

\noindent where we used Eq.~\ref{eq:branching_frobenius_perron} and defined $\mathcal{D} = \sum_k d_k^2$, the total quantum dimension. This factor of $d_\sigma$ cancels with the inverse factors of $d_\sigma$ attached to outgoing legs, and in the end we get an identity for outgoing legs. This completes the proof that the tensors are indeed isometries.

\begin{figure}
    \centering
    \includegraphics[width = \linewidth, draft=false]{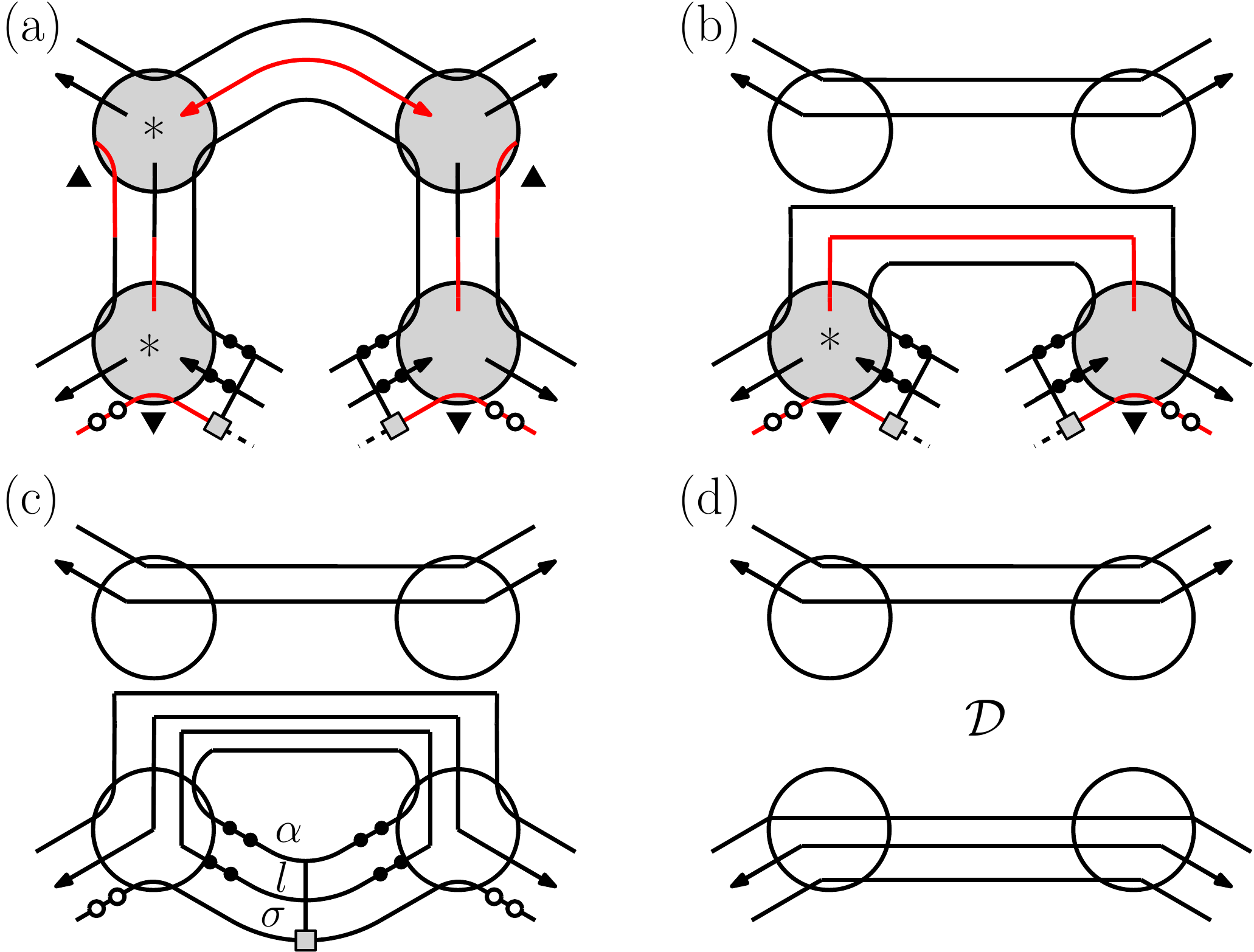}
    \caption{We contract the isometric tensor and its complex conjugate to show it is indeed isometry. We first use unitarity of the top tensor by contracting red unitary legs (a). After contracting physical legs, this will give a Kronecker delta for the top two legs (b). Contraction of the red unitary legs for the bottom tensor likewise produces Kronecker deltas (c). Finally, we evaluate the inner loops in (c) that carry $v_\alpha \times v_\alpha^* = d_\alpha$, as in the main text. The result is figure (d), which is an identity tensor times $\mathcal{D} = \sum_k d^2_k$.}
    \label{fig:isometry_proof}
\end{figure}

\section{\label{sec:finite_depth}Finite depth circuit}
In Sec.~\ref{sec:numerics}, it was found that the PEPS obtained by perturbing the exact toric code tensors could also be approximated by an isoTNS of finite bond dimension. We now show that the existence of an isoTNS representations of wavefunctions with long-range entanglement is not just a feature of the special fixed-point string-net liquids. Indeed, any state which may be transformed into a fixed-point by a local unitary quantum circuit of finite depth may be represented exactly by an isoTNS of finite bond dimension. Transformation by a quantum circuit is commonly regarded as a valid approximation to adiabatic evolution under a local, gapped Hamiltonian. We make  a version of this  statement  precise in Theorem~\ref{thm:adiabatic_evolution_fdc} in App.~\ref{app:short_depth_circuit} by carefully translating continuous quasi-adiabatic evolution as defined in Ref.~\cite{Hastings05} into a circuit of finite-depth and examining local observables.

First, we define some terminology pertaining to features of a quantum circuit. We then use Figs.~\ref{fig:2D_circuit} and~\ref{fig:graining_moves} to present a sequence of transformations used to preserve the isometric form of the tensor network after unitary operations are applied to the physical degrees of freedom. Finally, we calculate the increase in the bond dimension after application of a finite-depth circuit and these isometry-maintaining transformations.

A unitary operator $U$ is $k$-local if it may be written as $U = U_1 \otimes U_2 \dots $ where $\supp(U_i)$ consists of $k$ geometrically local physical degrees of freedom and $\supp(U_i) \cap \supp(U_j) = \emptyset$ if $i\neq j$. A quantum circuit $U_c$ is $k$-local with depth $d$ if it may be written as a product $U_c = U^{(d)}U^{(d-1)}\dots U^{(1)}$ where each $U^{(i)}$ is $k$-local and does not necessarily commute with the other $U^{(i)}$. 

To apply a $k$-local unitary and maintain the isometric form, we first perform a ``coarse-graining" transformation, apply the unitary, and then perform a ``fine-graining" transformation. To illustrate these transformations, we use as an example a 4-local, depth-1 circuit in which each local unitary $U$ is supported on four physical degrees of freedom organized as shown in Fig.~\ref{fig:2D_circuit}. We depict these transformations for one of the local unitaries graphically in Fig.~\ref{fig:graining_moves}, keeping in mind that this transformation occurs for each $U$ in the circuit. We first perform the coarse-graining transformation by contracting the ancilla indices connecting the four tensors associated to the support of $U$ (Fig.~\ref{fig:graining_moves} (a)). This produces a  single tensor with physical dimension $d^4$ and bond dimension $\chi^2$ (Fig.~\ref{fig:graining_moves} (b)). Applying the unitary on the physical indices of the coarse-grained tensor preserves its isometries. Finally, we perform a fine-graining transformation to split the coarse-grained tensor back into four tensors, each with physical index of dimension $d$. There are different ways to construct the new tensors, but a particularly simple choice is to re-group the physical indices of a site with other incoming indices, incurring a cost that depends on the range of the unitary (Fig.~\ref{fig:graining_moves}(c)). As illustrated, the resulting tensors do not have the same bond dimension.

\begin{figure}
    \centering
    \includegraphics[width=0.8\linewidth]{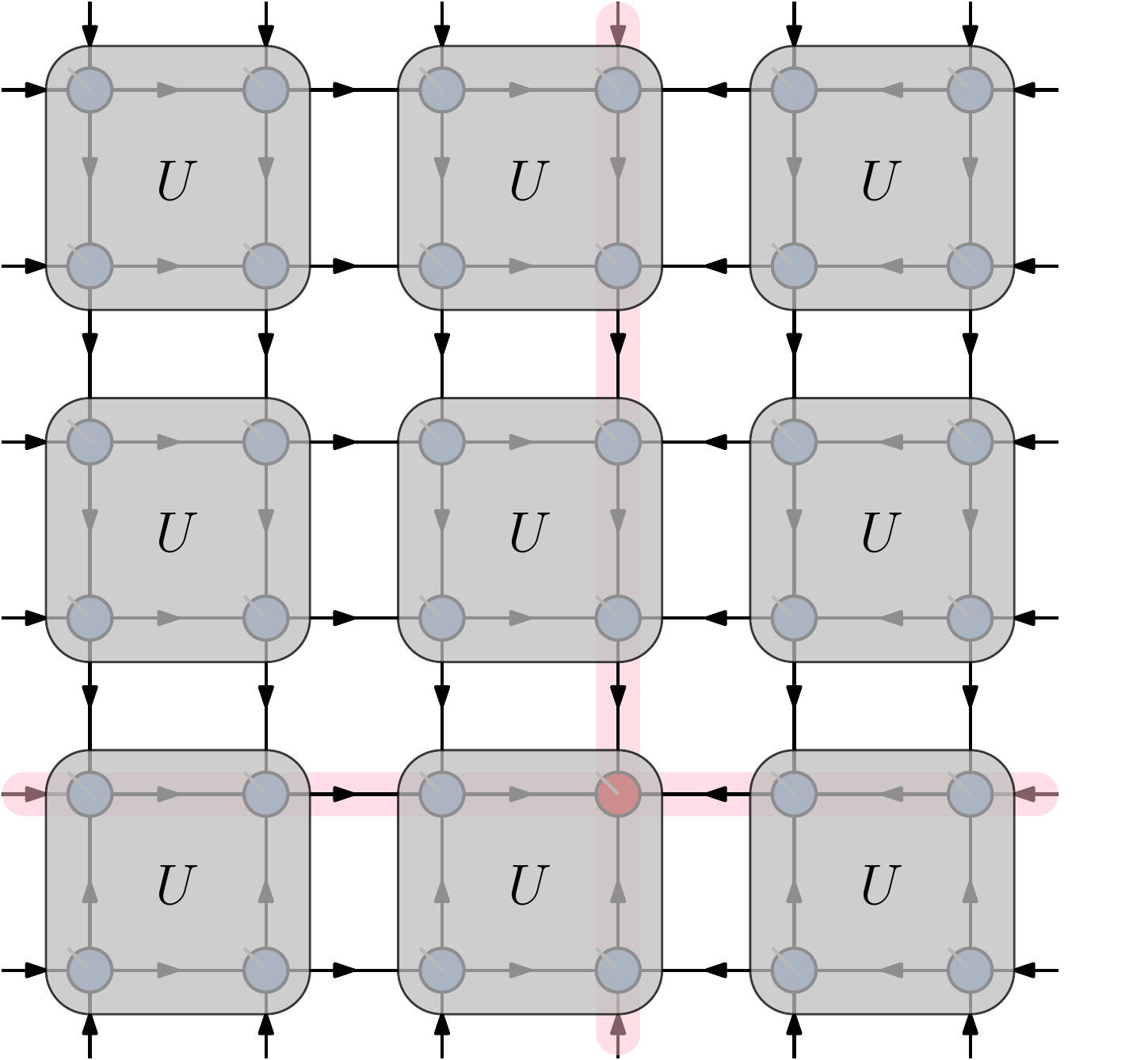}
    \caption{\label{fig:2D_circuit}Example of a local unitary quantum circuit of depth 1 acting on the physical Hilbert space of a 2D isoTNS.}
\end{figure}

We now consider the increase in the bond dimension due to a circuit of depth $D$. For example, in an MPS with physical Hilbert space dimension $d$ and bond dimension $\chi$, each unitary of a depth-1 $k$-local quantum circuit increases the bond dimension $\chi \rightarrow \chi d^{k-1}$. Continuing with the 2D example in Figs.~\ref{fig:2D_circuit} and \ref{fig:graining_moves}, each layer increases the bond dimension $\chi\rightarrow \chi^2 d^2$. Although the bond dimension is not uniform across the circuit, we consider the worst case increase. If subsequent layers of a depth-$D$ circuit in Fig.~\ref{fig:2D_circuit} have similar geometry, then the bond dimension increases as $\chi\rightarrow (\chi^{2}d^{2})^D$, independent of system size. For different $k$-local circuit geometries, the exponential factors will generally differ but the bond dimension will increase by $O(\exp(kD))$. The preceding analysis and conclusion therefore proceed similarly.

\begin{figure}
    \centering
    \subfigure[]{
        \includegraphics[scale=0.55]{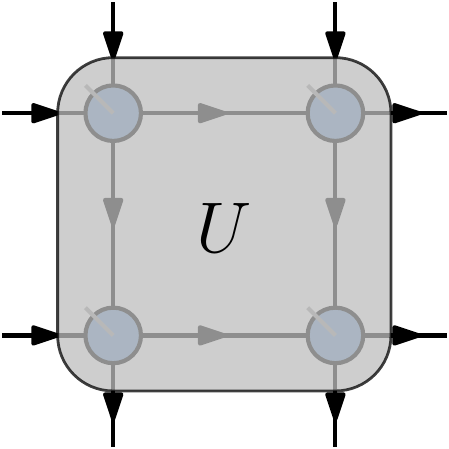}
        }
    \quad
    \subfigure[]{
        \includegraphics[scale=0.6]{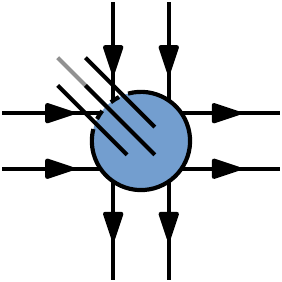}
    }

    \subfigure[]{
        \includegraphics[scale=0.9]{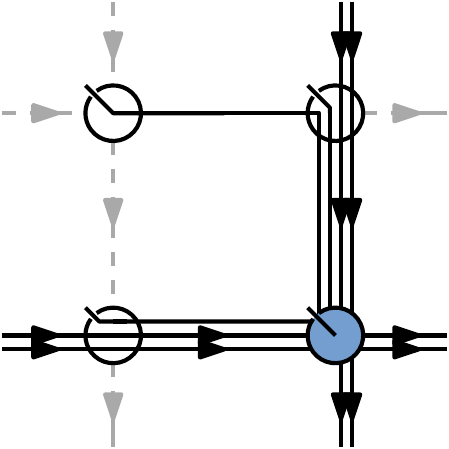}
    }
    \caption{\label{fig:graining_moves}Coarse-graining and fine-graining moves. (a) A local unitary from Fig.~\ref{fig:2D_circuit} acting in the physical Hilbert space. (b) Coarse-graining transformation. The four internal ancilla legs and the unitary is now effectively a one-site operation, where the tensor now has physical dimension $d^4$. (c) Fine-graining transformation. After the unitary is applied, the coarse-grained tensor is split and the original locations of the physical indices is restored. Dashed gray legs indicate a trivial ancilla with bond dimension 1. The top-left, top-right, and bottom-left tensors, denoted by empty circles, consist only of the Kronecker deltas indicated. The largest cost is incurred by the right leg, $\chi\rightarrow \chi^2 d^2$.}
\end{figure}

A subtle point is that it appears that the orthogonality hypersurface may not be exactly movable after applying the circuit. However, because it may be moved exactly within the fixed-point wavefunction, and the transformation described above preserves the isometries, the location of the orthogonality center will be inherited by the final state. Therefore, we conclude that all states which may be transformed into fixed-point string-net liquids by local quantum circuits of finite depth have exact 2D isoTNS representations of finite bond dimension.

A common wisdom is that approximating quasiadiabatic evolution by a finite depth circuit produces good approximations of local observables. This wisdom is motivated in a precise way in App.~\ref{app:short_depth_circuit}. In Sec.~\ref{sec:numerics}, for example, a family of perturbations to the fixed-point toric code tensors was considered. It was found to be well-approximated by an isoTNS of finite bond dimension. The context of this section considers a circuit which approximates the transformation of the ground state of the exact toric code to the perturbed states, which are in the same phase. The result of the analysis motivates why one might expect that isoTNS representations exist.

The above bound on the penalty of the bond dimension is not tight as we have demanded that at each step, an exact isoTNS representation is retained. In practice, however, this may not be necessary. However, the analysis of the relevant approximations required are beyond the scope of this paper.

\section{Conclusion}
We first showed numerically that a family of states in the same phase as the toric code can be put in an isometric form with constant error density. We then showed analytically that the PEPS description of string-net wavefunctions can be exactly put in the form of an isometric tensor network. Combined with the fact that even after applying finite depth circuits, isometries may be preserved (with a constant increase in bond dimension), these results show that isometric tensors are capable of representing a variety of interesting topological phases, including all abelian topological phases with a gappable edge. There are certain generalizations of string-net liquids \cite{Lan14} that are not covered by our proof, and it is possible that these, too, have exact isoTNS representations. We leave the problem of finding their isoTNS representations  to future work.

We make the following comments. Our construction of isometric string-net tensors relied on defining full-rank tensors by stripping off fusion constraints from the non-isometric tensors, and the two tensor networks are not related by gauge transformations alone. We speculate that generally, to obtain an equivalent isometric tensor network from a non-isometric counterpart, we may need to exploit redundancies besides gauge transformations. It remains to be seen what the consequences are for numerical algorithms that try to find isoTNS representations.

Finally, while we proved isoTNS descriptions exist, it remains to be proven that a local algorithm starting from either a non-isometrized TNS or Hamiltonian can find them. Furthermore, while we focused on finding exact isoTNS representations, it is of greater practical relevance  to understand how states may be \textit{approximated} by the isoTNS ansatz. In particular, we may be interested in how rapidly local expectation values or the fidelity converge with the bond dimension. More work on the algorithms side is required in order to address these issues, but we hope that the analytical results presented here provide motivation to initiate the required numerical work.

\begin{acknowledgments}
We thank Parsa Bonderson for clarifying technical details regarding fusion categories. We also thank Michael Levin for helpful correspondence regarding non-abelian string-net liquids. KS acknowledges support from the NSF Graduate Research Fellowship Program (Grant No. DGE 1752814). SC acknowledges support from the ERC synergy grant UQUAM. MZ and NB were supported by the DOE, office of Basic Energy Sciences under contract no. DE-AC02-05-CH11231. FP acknowledges the support of the Deutsche Forschungsgemeinschaft (DFG) Research Unit FOR 1807 through grants no. PO 1370/2-1, TRR80, the DFG under Germany's Excellence Strategy--EXC-2111-390814868 and the European Research Council (ERC) under the European Unions Horizon 2020 research and innovation program (grant agreement no. 771537).

\end{acknowledgments}

\appendix

\section{\label{app:isoTNS}Isometric Tensor Network Ansatz}

In this Appendix, we motivate and define the isometric tensor network state (isoTNS) ansatz presented in Ref.~\cite{Zaletel19} as well as clarify the graphical notation employed throughout this paper. The isoTNS ansatz generalizes the well-known isometry conditions that have made the matrix product state (MPS) ansatz an effective numerical tool for studying 1D wavefunctions. See Ref.~\cite{Schollwock11} for further details on MPS.

The MPS ansatz describes a wavefunction defined on a 1D chain of $N$ sites with open boundary conditions, each with an associated Hilbert space of dimension $d$. Each site $a$ ($2\leq a \leq N-1$) has a rank-3 tensor $T^{i_a}_{\alpha \beta}$ where the $i_a$ is referred to as the ``physical" index ($0 \leq i_a \leq d-1$) and $\alpha,\beta$ are ``ancilla" indices $1 \leq \alpha,\beta \leq \chi$ where $\chi$ is referred to as the ``bond dimension". Sites $a=1$ and $a=N$ have similarly defined rank-2 tensors. The wavefunction is defined by 

\begin{equation}
    \ket{\Psi} = \sum_{\vec{i}} T^{i_1}_{ \beta_1} T^{i_2}_{\beta_1 \beta_2}\dots T^{i_N}_{\beta_{N-1}} \ket{i_1 i_2 \dots i_N}
    \label{eq:MPS_ansatz}
\end{equation}
where $\vec{i} = (i_1, i_2, \dots i_N)$, the sum is taken over all vectors in $\mathbb{Z}_d^{\otimes N}$, and repeated indices are summed over (contracted). Graphically, tensors are denoted as solid shapes, such as circles and triangles as used in the main text. For each index a line emerges as in Fig.~\ref{fig:MPS_tensor}. Contraction of indices of two tensors is indicated graphically by joining the lines of the tensors corresponding to those indices.

\begin{figure}
\includegraphics[scale=0.55]{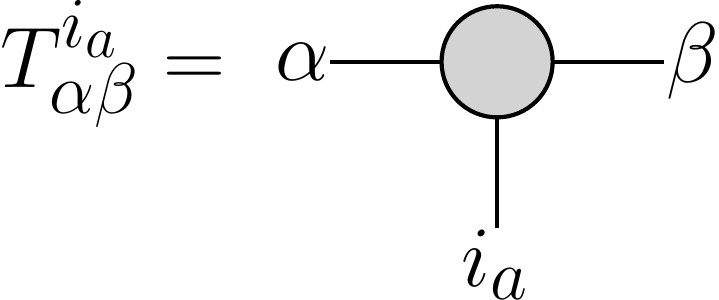}
\caption{\label{fig:MPS_tensor} Graphical notation of a tensor. }
\end{figure}
This ansatz contains a gauge degree of freedom that allows one to redefine the matrices $T$ without changing the wavefunction. For example, one can insert any invertible map $V$ between two adjacent tensors without modifying the wavefunction.

\begin{equation}
V^{-1}_{\alpha \beta} V_{\alpha' \beta} = \delta_{\alpha, \alpha'}
\end{equation}
Tensors  $\{T^{i_a}_{\alpha \beta}\}$ may be redefined as $T^{i_a}_{\alpha\beta}\rightarrow T^{i_a}_{\alpha\gamma}V^{-1}_{\beta\gamma}$ and $T^{i_{a+1}}_{\alpha\beta}\rightarrow V_{\alpha \gamma} T^{i_{a+1}}_{\gamma \beta}$. A tensor is left-isometric if $T^{i_a*}_{\alpha\beta}T^{i_a}_{\alpha\beta'} = \delta_{\beta \beta'}$ and right-isometric if $T^{i_a}_{\alpha\beta}T^{i_a*}_{\alpha'\beta} = \delta_{\alpha\alpha'}$. Graphically, a tensor satisfying isometry conditions is denoted by placing incoming arrows on indices which are contracted to obtain the identity in the remaining indices, denoted by placing outgoing arrows as in Fig.~\ref{fig:isometry_conditions}.

\begin{figure}[htp]
    \centering
    \subfigure[]{
        \includegraphics[scale=0.5]{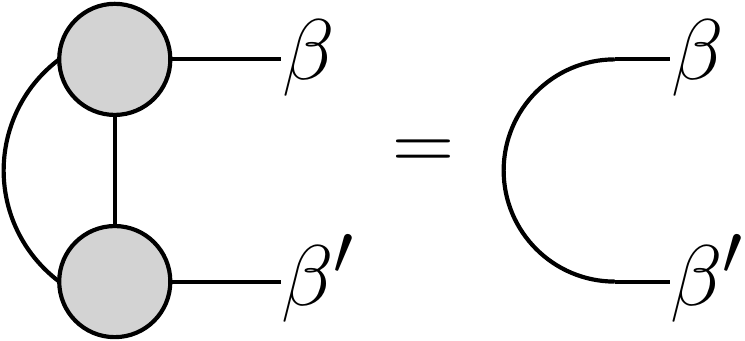}
        }
    \quad
    \subfigure[]{
        \includegraphics[scale=0.5]{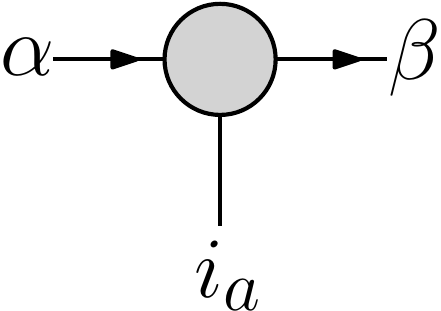}
    }
    \qquad
    \subfigure[]{
        \includegraphics[scale=0.5]{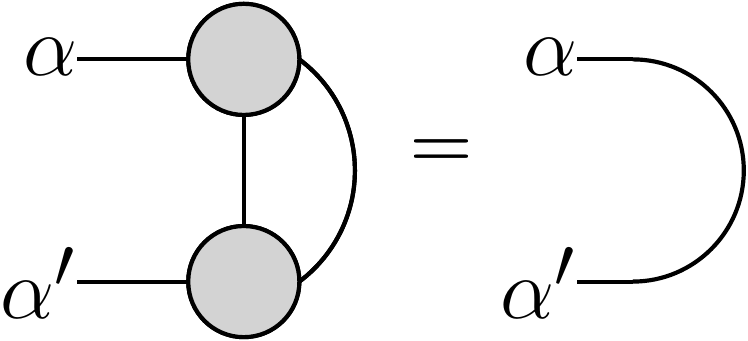}
    }
    \quad
    \subfigure[]{
        \includegraphics[scale=0.5]{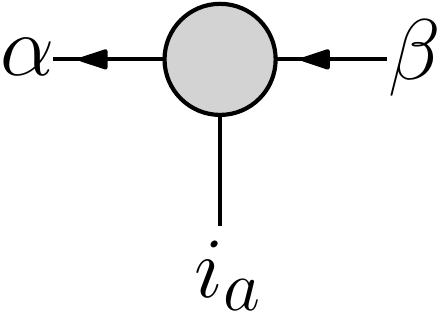}
    }        
    \caption{\label{fig:isometry_conditions}Graphical notation of isometries. (a) Left-isometric condition for an MPS tensor. (b) Arrow notation for an MPS tensor indicating that it satisfies condition (a). Physical indices always have an implicit incoming arrow. (c) Right-isometric condition for an MPS tensor. (d) Arrow notation for an MPS tensor indicating that it satisfies condition (c).}
\end{figure}

By suitable gauge transformations, for any site $a'$, tensors for $a < a'$ ($a > a'$) can be placed in left(right)-isometric form. In that case, we refer to site $a'$ as the orthogonality center. This representation allows for efficient computation of expectation values of operators supported compactly around site $a'$, as tensors outside the support need not be contracted explicitly.

Higher dimensional isoTNS ansatzes lift these isometry conditions to tensors networks of higher dimensions. For example, in the 2D generalization depicted in Fig.~\ref{fig:2D_isoTNS}, the 0-dimensional orthogonality center of the 1D MPS is now a 1D orthogonality hypersurface, which itself is an MPS with a 0D orthogonality center. The direction of the isometries implies that expectation values of operators supported on the orthogonality hypersurface can be computed without explicitly contracting tensors supported off the orthogonality hypersurface. If the orthogonality hypersurface can be moved as in the 1D case, then such a wavefunction may be numerically tractable. In this paper, we have provided a partial answer to the question of what physical states are described exactly by 2D tensor networks of this form.

\section{\label{app:string_net_liquid}String-Net Liquid}
In this Appendix, we review the construction of the string-net liquids and relate them to the PEPS tensor used in the main text. We restrict ourselves to isotopy-invariant and mirror-symmetric case considered in Ref.~\cite{Levin05}, although some generalizations exist. In that paper, string-net states were introduced to characterize various topological orders in 2+1D models. Many-body states are defined on a trivalent lattice with degrees of freedom on the edges. Each edge has an oriented string on it. The global string-net wavefunction is a weighted superposition of configurations of strings
\begin{equation}
    \ket{\Psi} = \sum_{X} \Phi(X) \ket{X}
\end{equation}
where $X$ refer to different configurations of strings and $\Phi(X)$ is a function from configurations to complex numbers. The following input information is necessary to define the function $\Phi(X)$ by defining the relationships between different configurations of strings.

\begin{enumerate}
    \item \textbf{String Types}: Each edge carries one of $N+1$ possible strings numbered $s=1\dots N$ or no string at all, numbered $s=0$. Strings are generally oriented so that $i^*$ denotes a string oriented opposite $i$. A string is unoriented if $i=i^*$.
    \item \textbf{Quantum Dimension}: Each string type $s$ has a value $d_s > 0$ associated to it, termed the ``quantum dimension''. The total quantum dimension $\mathcal{D}$ is defined as $\mathcal{D} = \sum_s d_s^2$.
    \item \textbf{Branching}: Not all configurations of strings are valid. One constraint on configurations that may appear are determined by which strings may meet at a vertex. These are referred to as ``branching rules" or ``fusion constraints", indicated by a symbol $\delta_{ijk}$ which is 0 if strings $ijk$ are forbidden from meeting at a vertex or 1 if permitted. By definition, fusion constraints are invariant under permutation of indices or switching the orientation of three strings at once. We assume they satisfy the following associativity constraint:
    \begin{equation}
        \sum_e \delta_{abe^*} \delta_{ecd} = \sum_f \delta_{afd^*}\delta_{bcf^*}
    \end{equation}
    \item $F$ \textbf{symbols}: The $F$-symbol is a six-index tensor describing how to recouple a string-net configuration. It defines how coefficients in the many body wavefunction $\Phi(X)$ differ between two configurations $X$ and $X'$ which differ by a local recoupling of strings. In Eq.~\ref{eq:string_net_f_move} below, we mean that $\Phi(X)$ for configuration $X$ on the LHS is related to $\Phi(X')$ for a set of configurations $X'$ parameterized by index $n$ on the RHS.
        \begin{equation}
            \includegraphics[scale=0.5]{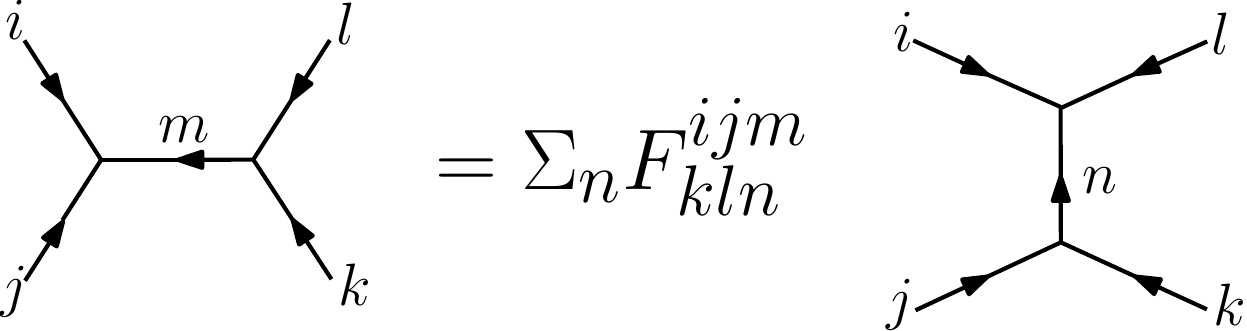}
            \label{eq:string_net_f_move}
        \end{equation}
    The quantum dimension is also encoded in the $F$-symbol as $d_s = 1/|F^{ss^*0}_{ss^*0} |$. We define the quantity $\ell_s = 1/F^{ss^*0}_{ss^*0}$ which may be negative and $v_s = \sqrt{\ell_s}$ which may be complex. By definition, we have
    \begin{equation}
        \sum_c \delta_{abc^*} \ell_c = \ell_a \ell_b
    \end{equation}
\end{enumerate}

We now summarize properties of $F$-symbols used throughout the main text. The $F$-symbols obey a relation referred to as the pentagon equation
\begin{equation}
    \sum_{n=0}^N F^{mlq}_{kp^*n}F^{jip}_{mns^*}F^{js^*n}_{lkr} = F^{jip}_{q^*kr^*}F^{riq^*}_{mls^*}
    \label{eq:string_net_pentagon}
\end{equation}
and may be chosen to satisfy the normalization
\begin{equation}
    F^{ijk}_{j^*i^* 0} = \frac{v_k}{v_iv_j} \delta_{ijk}
    \label{eq:string_net_normalization}
\end{equation}
They can be shown to obey a tetrahedral symmetry property:
\begin{equation}
    F^{ijm}_{kln} = F^{jim}_{lkn^*} = F^{lkm^*}_{jin} = \frac{v_m v_n}{v_j v_l} F^{imj}_{k^*nl}
    \label{eq:string_net_tetrahedral}
\end{equation}
The $F$-symbols must satisfy the following ``unitarity" condition.

\begin{equation}
    F^{i^*j^*m^*}_{k^*l^*n^*} = (F^{ijm}_{kln})^*
    \label{eq:string_net_unitarity}
\end{equation}
The fact this is a unitarity condition can be seen from the pentagon equation and by using the properties listed above, where we set $j = k^*$, $l = s^*$ and $r = 0$. Then,

\begin{eqnarray}
    \nonumber \sum_{n=0}^N F^{ms^*q}_{j^*p^*n}F^{jip}_{mns^*}F^{js^*n}_{sj^*0} &=& F^{jip}_{q^*j^*0}F^{0iq^*}_{ms^*s^*}\\
    \nonumber \sum_{n=0}^N F^{j^*p^*q^*}_{msn^*}F^{jpi}_{m^*s^*n} &=& \delta_{p,i}\delta_{jip}\delta_{sqm}\\
    \sum_{n=0}^N (F^{jpq}_{m^*s^*n})^*F^{jpi}_{m^*s^*n} &=& \delta_{p,i}\delta_{jip}\delta_{sqm}
\end{eqnarray}

\noindent where we used tetrahedral symmetry and normalization to go to the second line and Eq.~\ref{eq:string_net_unitarity} to go to the third line. Therefore the ``unitarity condition'' indeed imply that $F^{ijm}_{kln}$ is a unitary matrix when viewed as a matrix with indices $m,n$.

We also note the following equality holds as a result of Eqs.~\ref{eq:string_net_normalization} and~\ref{eq:string_net_pentagon}.

\begin{equation}
    \sum_c \delta_{abc^*} d_c = d_a d_b
\end{equation}

\begin{figure}
    \centering
    \includegraphics[scale=0.7]{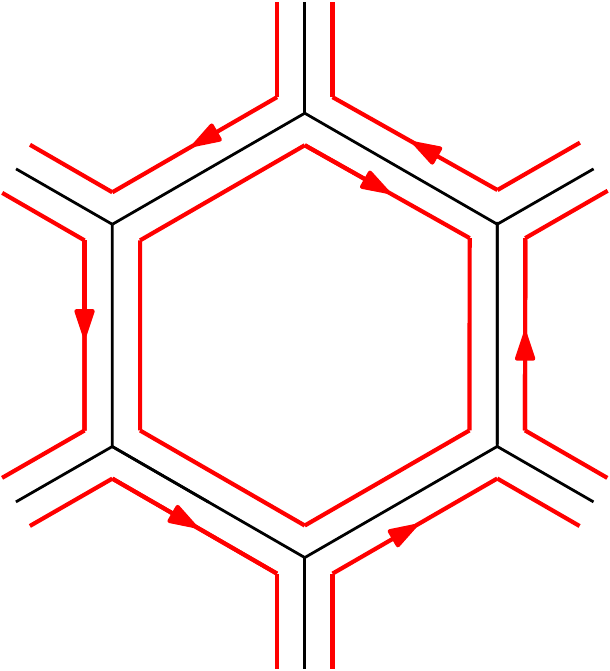}
    \caption{\label{fig:thickened_lattice}Thickened honeycomb lattice. Physical degrees of freedom are situated on the black edges. Each plaquette carries an internal loop carrying a string.}
\end{figure}

To understand these states as well as construct their tensors, we begin with a thickened honeycomb lattice in which in addition to the original edges, every hexagonal face contains an additional internal loop as in Fig.~\ref{fig:thickened_lattice} as in Ref.~\cite{Gu2009}. Although the final wavefunction is defined on the original honeycomb lattice, we can construct it by first creating a superposition of internal loops. The internal loops are then coupled to the degrees of freedom on the original honeycomb lattice using $F$-moves. First, define the operator $B_p$ associated to plaquette $p$:
\begin{equation}
    B_p = \sum_{s=0}^N \frac{\ell_s}{\mathcal{D}} B_p^s
\end{equation}
where $B_p^s$ creates a string of type $s$ on plaquette $p$. The operator $B_p$ then creates a superposition of strings on the internal loop at plaquette $p$. The wavefunction on the internal loops is then defined by starting with $\ket{0}$, the state where all internal loops are the 0-string:

\begin{equation}
    \ket{\Psi_{\text{loops}}} = \prod_p B_p\ket{0} \sim \sum_{\vec{s}}\left(\prod_{i} \ell_{s_i}\right) \ket{\vec{s}}
\end{equation}
The wavefunction defined on the edges of the original honeycomb lattice is obtained by fusing the loops on both sides of each edge according to Eq.~\ref{eq:string_net_f_move} as in Fig.~\ref{fig:string_net_recoupling}. We use this prescription to construct a tensor for each vertex, depicted as triangles in Fig.~\ref{fig:honeycomb_lattice} (b). The resulting tensor in the bulk takes the form

\begin{equation}
    T^{i \alpha \beta}_{\gamma j k} = \sqrt{\frac{v_i v_j}{v_k}} F^{i\alpha^* \beta}_{\gamma j k} \delta_{\alpha \gamma^* k}\delta_{\beta^*\gamma j}\delta_{\alpha^* \beta i}
\end{equation}
The tensor on the other sublattice is obtained similarly, and they are related by complex conjugation:

\begin{equation}
    T^{l \alpha \sigma}_{\gamma m k}  =  \sqrt{\frac{v_l v_m}{v_k}} F^{l^* \alpha \sigma^*}_{\gamma^* m^* k^*} \delta_{\alpha \gamma^* k}\delta_{\sigma^*\gamma m}\delta_{\alpha^* \sigma l}
\end{equation}

\begin{figure}
    \centering
    \subfigure[]{
        \includegraphics[scale=0.40]{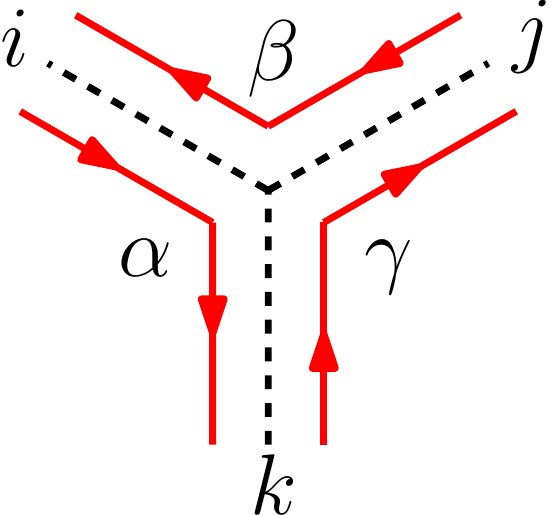}
        }
    \quad
    \subfigure[]{
        \includegraphics[scale=0.40]{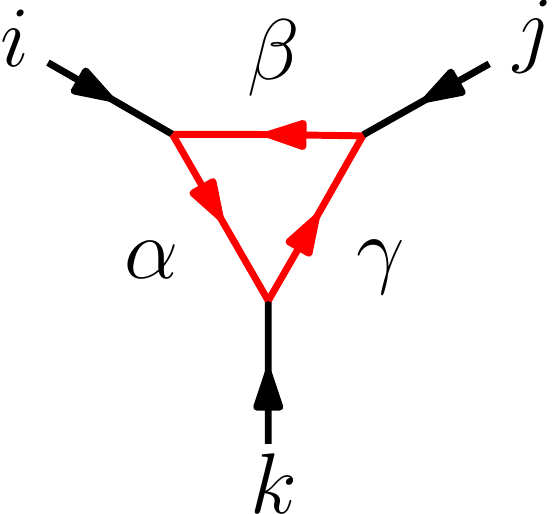}
    }
    \quad
    \subfigure[]{
        \includegraphics[scale=0.40]{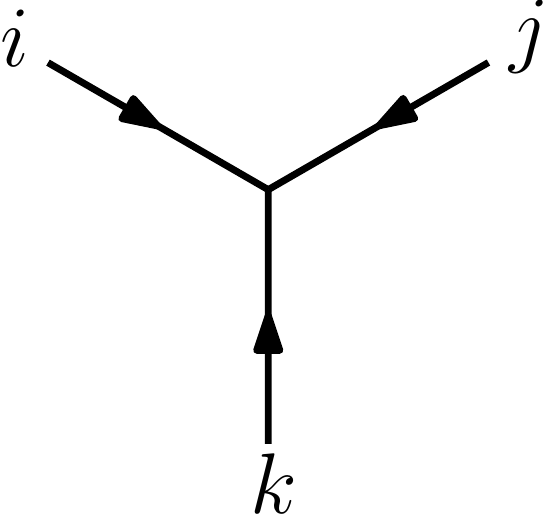}
    }
    \caption{\label{fig:string_net_recoupling}Schematic for obtaining string-net wavefunction via $F$-moves. (a) $B_p^s$ on each plaquette surrounding the vertex has placed a loop of types $\alpha,\beta,\gamma$. (b) $F$-move is applied to couple internal loops on the plaquettes to strings on the edges. (c) The remaining bubble is eliminated.}
\end{figure}
Tensors along the edge of the system (the outermost tensors in Fig.~\ref{fig:honeycomb_lattice} (b)) may be similarly derived. The edge tensors amount to adding different factors of $v_i$ depending on where on the edge it is (e.g. corners and edges carry different factors of $v_i$).

\section{\label{app:f_symbols_identities}Relations of \texorpdfstring{$F$}{F}-symbols}
The input data used to define a string-net model defined in App.~\ref{app:string_net_liquid} is closely related to the algebraic data specifying a unitary fusion category (UFC). In the main text, we have studied the string-net models as first proposed in Ref.~\cite{Levin05}, which obey additional properties beyond those of the most general UFC. The language and rules of the original string-nets of Levin and Wen and those of general UFCs are therefore slightly different, and in this Appendix, we sketch how properties of the $F$-symbols used to define Levin-Wen string-net models correspond to properties of the $F$-symbols that define their corresponding UFCs. We emphasize again that these properties may not hold for the most general UFCs. Levin and Wen's string-nets obey strict rotational and isotopy invariance and correspond to some strict subset of all UFCs. This appendix assumes familiarity with notations and definitions in UFC literature, so we refer the reader to Ref.~\cite{Bonderson08} for an introduction to the UFC framework.

For every Levin-Wen $F$-symbol, we define a corresponding UFC $F$-symbol as follows:
\begin{equation}
    [F^{abc}_{d}]_{ef} = F^{b^*a^*e}_{dc^*f}
\end{equation}
These are the ordinary, unbent $F$-symbols of the UFC. We discuss bent versions below. Next, branching rules correspond to fusion constraints as
\begin{equation}
        N_{ab}^{c^*} = N_{ac}^{b^*} = N_{bc}^{a^*} = \delta_{abc}
\end{equation}

These identifications allow us to immediately find several identities for the UFC $F$-symbols $[F^{abc}_d]_{ef}$ using properties from App.~\ref{app:string_net_liquid}. First, we recover the well-known UFC version of the pentagon equation:

\begin{equation}
    [F^{pcd}_{e}]_{qr} [F^{abr}_{e}]_{ps} = \sum_{x} [F^{abc}_{q}]_{px} [F^{axd}_{e}]_{qs} [F^{bcd}_{s}]_{xr}
    \label{eq:ufc_pentagon}
\end{equation}
Next, unitarity of the ordinary (unbent) $F$-symbols within the subspaces allowed by fusion constraints may be seen from the orthonormality of the string-net recoupling moves:

\begin{eqnarray}
    \sum_f \Fst{a}{b}{c}{d}{e}{f}\Fst{a}{b}{c}{d}{e'}{f}^* &=& \sum_f F^{b^*a^*e}_{dc^*f}(F^{b^*a^*e'}_{dc^*f})^* \\
    &=& \delta_{e,e'} \delta_{b^*a^*e} \delta_{ced^*} \\
    &=& \delta_{e,e'} N_{ab}^e N_{ce}^d
    \label{eq:ufc_unitarity}
\end{eqnarray}
The next two properties may not hold for general UFCs. First, the tetrahedral symmetries from Eq.~\ref{eq:string_net_tetrahedral} are translated as follows:

\begin{equation}
    [F^{abc}_{d}]_{ef} = [F^{bad^*}_{c^*}]_{ef^*} = [F^{d^*cb}_{a^*}]_{e^*f}
    \label{eq:ufc_tetrahedral}
\end{equation}
The second equality in Eq.~\ref{eq:string_net_tetrahedral} immediately yields a mirror symmetry:
\begin{equation}
    [F^{\alpha ij}_{\gamma}]_{\beta k} = [F^{\gamma^* ji}_{\alpha^*}]_{\beta^* k}
\end{equation}
Lastly, we can show that bent $F$-symbols satisfy a unitarity condition similar to Eq.~\ref{eq:ufc_unitarity}:
\begin{eqnarray}
    \nonumber \sum_e \Fbent{a}{b}{c}{d}{e}{f}\Fbent{a}{b}{c}{d}{e}{f'}^* &=& \sum_e \frac{d_e \sqrt{d_f d_{f'}}}{d_ad_d} \Fbtos{a}{b}{c}{d}{e}{f}^* \Fbtos{a}{b}{c}{d}{e}{f}\\
    \nonumber &=& \sum_e \frac{d_e \sqrt{d_f d_{f'}}}{d_ad_d} (F^{e^*c^*a}_{fb^*d})^*F^{e^*c^*a}_{f'b^*d}\\
    \nonumber&=& \sum_e (F^{b^* a^* f}_{c d e^*})^*F^{b^* a^* f'}_{c d e^*}\\
    \nonumber&=& \delta_{f,f'} \delta_{b^* a^* f} \delta_{c^*d^*f}\\
    &=& \delta_{f,f'} N_{ab}^{f} N_{cd}^{f}
\end{eqnarray}
where it is important to recall that $v_i = \sqrt{\ell_i}$ may be complex. Non-trivial Frobenius-Schur indicators do not cause a problem, as $v_i \times v_i^* = d_i$ \cite{Bonderson08}.

\section{\label{app:full_rank_tensors}Constructing the \texorpdfstring{$A$}{A}-symbols}
In this appendix,  we show in detail how to construct the $A$-symbols $A^{ijm}_{kln}$ defined to satisfy the properties in Eq.~\ref{eq:full_rank_tensors} in the main text, which we repeat here for convenience:

\begin{equation}
    \begin{aligned}
        F^{ijm}_{kln} = A^{ijm}_{kln} \delta_{ijm} \delta_{klm^*}&\\
        \sum_{n=0}^N (A^{ijm}_{kln})^*A^{ijm'}_{kln} = \delta_{m,m'}&
    \end{aligned}
\end{equation}

We first fix $i,j,k,l$. Due to fusion constraints, there is a limited choices for $m,n$ that gives nonzero $F^{ijm}_{kln}$. We can count the number of those indices as follows.

\begin{equation}
    \delta_{ijkl} \stackrel{\text{def}}{=} \sum_m \delta_{ijm} \delta_{klm^*} = \sum_n \delta_{iln} \delta_{jkn^
*}
\end{equation}

\noindent where the second equality is guaranteed by the associativity of the fusion constraints. We permute indices for $m$ and $n$ such that the first $\delta_{ijkl}$ indices satisfy the fusion constraints. Then, seen as a matrix in $m$ and $n$, $F^{ijm}_{kln}$ has the following structure as a result of Eq.~\ref{eq:fusion_constraints} and Eq.~\ref{eq:F_unitary_in_subspace}.

\begin{equation}
    F^{ijm}_{kln} =
    \begin{bmatrix}
    F^{ij}_{kl} & 0\\
    0 & 0\\
    \end{bmatrix}
\end{equation}

\noindent where $F^{ij}_{kl}$ is a $\delta_{ijkl} \times \delta_{ijkl}$ unitary matrix that corresponds to the nonzero part of $F^{ijm}_{kln}$. In order to make this into a full rank unitary matrix, we populate the bottom right by an identity matrix as in the following:

\begin{equation}
    A^{ijm}_{kln} =
    \begin{bmatrix}
    F^{ij}_{kl} & 0\\
    0 & I\\
    \end{bmatrix}
\end{equation}

This matrix is unitary in the full, unconstrained space. By construction, we have $A^{ijm}_{kln} \delta_{ijm} \delta_{klm^*} = A^{ijm}_{kln}\delta_{iln} \delta_{jkn^
*} = F^{ijm}_{kln}$, so this $A$-symbol satisfies the properties of Eq.~\ref{eq:full_rank_tensors}.

\subsection{\label{app:1_string_model}\texorpdfstring{$N=1$}{N=1} string model}
We will look at the simplest string-net model, which only has a vacuum string ($0$) and one non-trivial string ($1$). The corresponding fusion constraints are given by

\begin{equation}
    \delta_{000} = \delta_{110} = 1
\end{equation}
Other combinations of indices that are not permutations of the above two are all zero. The $F$-symbols are trivial and given by the product of fusion constraints:

\begin{equation}
    F^{ijm}_{kln} = \delta_{ijm} \delta_{klm^*} \delta_{iln} \delta_{jkn^*}
\end{equation}
The resulting string-net model is known to produce the toric code Hamiltonian \cite{Levin05}.

To construct the full-rank tensors for the toric code, we fix the values of $i,j,k,l$ and look at the resulting matrices. For concreteness, we will look at two choices of $i,j,k,l$:

\begin{equation}
    F^{00m}_{00n} = 
    \begin{bmatrix}
    1 & 0\\
    0 & 0\\
    \end{bmatrix},\hspace{0.5 cm}
    F^{01m}_{00n} = 
    \begin{bmatrix}
    0 & 0\\
    0 & 0\\
    \end{bmatrix}\, ,
\end{equation}

\noindent where matrices on the right hand side are labeled by string-types in the order  $0,1$. The corresponding entries of the full rank tensor $A^{ijm}_{kln}$ are then given by

\begin{equation}
    A^{00m}_{00n} = 
    \begin{bmatrix}
    1 & 0\\
    0 & \textcolor{red}{1}\\
    \end{bmatrix},\hspace{0.5 cm}
    A^{01m}_{00n} = 
    \begin{bmatrix}
    \textcolor{red}{1} & 0\\
    0 & \textcolor{red}{1}\\
    \end{bmatrix}\, ,
\end{equation}
where we used red to indicate which entries were changed from the original $F$-symbol.

\subsection{\label{app:ising_anyon_model}The Ising anyon model}
As a more non-trivial example, we look at the Ising string-net model. The model has two non-trivial string types, $\frac{1}{2},1$, in addition to the vacuum string $0$. Non-zero fusion constraints are given by

\begin{equation}
    \delta_{000} = \delta_{110} = \delta_{\frac{1}{2}\frac{1}{2}0} = \delta_{\frac{1}{2}\frac{1}{2}1} = 1
\end{equation}

\noindent and their permutations. Some of the $F$-symbols are given by

\begin{equation}
    F^{\frac{1}{2}\frac{1}{2}m}_{\frac{1}{2}\frac{1}{2}n} = 
    \begin{bmatrix}
    \frac{1}{\sqrt{2}}& \frac{1}{\sqrt{2}} & 0 \\
    \frac{1}{\sqrt{2}} & -\frac{1}{\sqrt{2}}&0\\
    0 & 0 & 0\\
    \end{bmatrix},
    F^{\frac{1}{2}1m}_{\frac{1}{2}1n} = 
    \begin{bmatrix}
    0 & 0 & 0 \\
    0 & 0 & 0\\
    0 & 0 & -1\\
    \end{bmatrix}\,,
\end{equation}

\noindent where matrix indices on the right are ordered as $0, 1, \frac{1}{2}$. The corresponding entries of the full rank tensor are given by

\begin{equation}
    A^{\frac{1}{2}\frac{1}{2}m}_{\frac{1}{2}\frac{1}{2}n} = 
    \begin{bmatrix}
    \frac{1}{\sqrt{2}}& \frac{1}{\sqrt{2}} & 0 \\
    \frac{1}{\sqrt{2}} & -\frac{1}{\sqrt{2}}&0\\
    0 & 0 & \textcolor{red}{1}\\
    \end{bmatrix},
    A^{\frac{1}{2}1m}_{\frac{1}{2}1n} = 
    \begin{bmatrix}
    \textcolor{red}{1} & 0 & 0 \\
    0 & \textcolor{red}{1} & 0\\
    0 & 0 & -1\\
    \end{bmatrix}
\end{equation}

\section{\label{app:abelian_isometry} Isometric Tensor for Abelian String-Net}
In this section, we show that the isometric tensor construction can be generalized to abelian models that were not covered in  original string-net construction of Ref. \cite{Levin05}. In particular, this construction captures all models given in \cite{Lin14}.

Let us first review what we mean by an abelian string-net model. An abelian string-net model satisfies the following condition: given two string types $a, b$, there is only one string type $c$ such that $\delta_{abc} = 1$. As a consequence, $d_a = 1$ for all string types. We also require $|F^{i\alpha\beta}_{\gamma j k}| = 1$ when it satisfies fusion constraints, so as to ensure unitarity. 

Now, consider a tensor of the form

\begin{equation}
    \tilde{F}^{i\alpha\beta}_{\gamma jk} = \mathcal{N}^{i\alpha\beta}_{\gamma jk} F^{i\alpha\beta}_{\gamma jk}
\end{equation}

\noindent where $|\mathcal{N}| = 1$ is a phase and $F$ is the $F$-symbol for some abelian string-net model. The tensor network representation for Levin-Wen string-net \cite{Levin05} as well as its generalizations \cite{Lin14} can be put in this form. This defines a triangular tensor as found in Fig.~\ref{fig:tensor_definition}. As in the main text, we need to ``strip off'' the fusion constraints such that the resulting tensor becomes isometric. Since there are three isometry directions, we need to define three six-index objects, $A$, $B$, and $C$. These are generalizations of the $A$-symbols in the main text and satisfy the following conditions:

\begin{equation}
    \sum_k A^{i\alpha\beta}_{\gamma jk}A^{i\alpha\beta'}_{\gamma jk} = \delta_{\beta,\beta'}, F^{i\alpha\beta}_{\gamma jk} = A^{i\alpha\beta}_{\gamma jk}\delta_{i \alpha \beta} \delta_{\gamma j \beta^*} 
\end{equation}

\begin{equation}
    \sum_j B^{i\alpha\beta}_{\gamma jk}B^{i\alpha'\beta}_{\gamma jk} = \delta_{\alpha,\alpha'},
         F^{i\alpha\beta}_{\gamma jk} = B^{i\alpha\beta}_{\gamma jk}\delta_{i\alpha\beta}\delta_{\alpha\gamma k^*} 
\end{equation}

\begin{equation}
         \sum_i C^{i\alpha\beta}_{\gamma jk}C^{i\alpha\beta}_{\gamma' jk} = \delta_{\gamma,\gamma'},
         F^{i\alpha\beta}_{\gamma jk} = C^{i\alpha\beta}_{\gamma jk}\delta_{\gamma j \beta^*} \delta_{\alpha \gamma k^*} 
\end{equation}

\noindent where we ignored $\mathcal{N}$ since it is just a phase and does not affect unitarity. Note that we only needed the $A$-symbols in the main text, since we could change isometry directions using tetrahedral symmetry. Here, we do not assume the symmetry exists, and instead rely on the fact that we are dealing with abelian string-net.

For concreteness, we focus on constructing a full-rank tensor $C$. In the spirit of App.~\ref{app:full_rank_tensors}, we fix four indices $\alpha, j, \beta, k$ and look at the matrix in terms of $i, \gamma$. Since we are dealing with abelian string-net $F$-symbols, there is at most one choice of $i, \gamma$ that gives nonzero $F$-symbol. Therefore, after appropriate permutation of indices, the matrix takes one of the following two forms.

\begin{equation}
    F^{i\alpha\beta}_{\gamma jk} =
    \begin{bmatrix}
    e^{i\theta_{\alpha \beta j k}} & 0 \\
    0 & 0 \\
    \end{bmatrix}
    , F ^{i\alpha\beta}_{\gamma jk} = 
    \begin{bmatrix}
    0 & 0 \\
    0 & 0 \\
    \end{bmatrix}
\end{equation}

\noindent where $\theta_{\alpha\beta j k}$ is a real number that depends on four indices, and bottom-right corner of the matrix is a $N \times N$ square matrix. It is then straightforward to construct the full-rank tensor:

\begin{equation}
    C^{i\alpha\beta}_{\gamma jk} =
    \begin{bmatrix}
    e^{i\theta_{\alpha \beta j k}} & 0 \\
    0 & \textcolor{red}{I} \\
    \end{bmatrix}
    , C ^{i\alpha\beta}_{\gamma jk} = 
    \begin{bmatrix}
    \textcolor{red}{1} & 0 \\
    0 & \textcolor{red}{I} \\
    \end{bmatrix}
\end{equation}

\noindent The other two directions go similarly. Note that $N=1$ string model in App.~\ref{app:1_string_model} is a special case of this construction. This construction proves the triangular tensor can be put in an isometric form in any direction. The proof of the isometry for a unit cell immediately follows by using the argument in Fig.~\ref{fig:isometry_proof}.

\section{\label{app:entanglement_spectrum} Entanglement spectrum}
In this section, we calculate the entanglement spectrum of the string-net liquid by using the isometric form. Our starting point is Fig.~\ref{fig:isometric_and_orthogonal}. Here we have four regions with different isometry directions and a orthogonality hypersurface. Note that the orthogonality hypersurface has physical indices coming out of it. In order to calculate the entanglement spectrum, however, it is better to collapse the orthogonality hypersurface to a line such that it only has ancilla indices coming out of it. This can be done by contracting physical indices of the orthogonality hypersurface. The resulting ``slim'' orthogonality hypersurface is given in Fig.~\ref{fig:orthogonality_center_for_entanglement_spectrum} (a). This is a boundary between four regions. In order to calculate bipartite entanglement spectrum, we can trace out unnecessary boundaries by using the relationship in Fig.~\ref{fig:orthogonality_center_for_entanglement_spectrum} (b). The result is given in (c). This is already in a diagonal form, and contracting this with its complex conjugate gives the density matrix. Remembering to take the fusion constraints into account, the density matrix is given by

\begin{eqnarray}
    \nonumber \rho &=& \sum_{\{q\},\{s\}} \left(\prod_{i} d_{q_i}\right) \delta_{q_0q_1 s_1} \delta_{s_1 q_2 s_2} ... \delta_{s_{N-1} q_N q_{N+1}}\\
    &\times&\ket{\{q\},\{s\}}\bra{\{q\},\{s\}}
\end{eqnarray}

\noindent where $\{q\}$ labels physical legs and $\{s\}$ label ancilla legs. This result agrees with the one in \cite{Levin06}, and thus provides a non-trivial consistency check on our isometric tensor construction.

\begin{figure*}
    \centering
    \includegraphics[width = 0.8\linewidth]{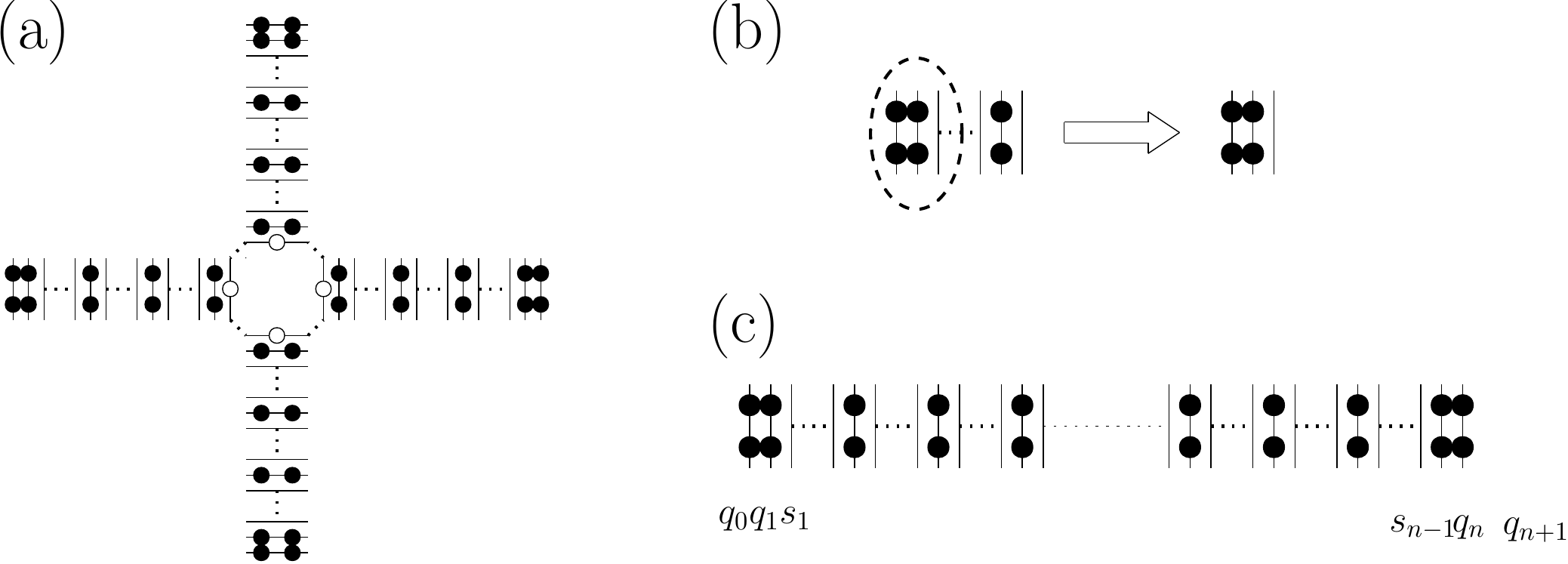}
    \caption{(a) Orthogonality hypersurface given by tracing over the physical indices of the orthogonality hypersurface in Fig.~\ref{fig:isometric_and_orthogonal}. Dotted lines mean those bonds have the same value. Each set of three bonds also satisfy fusion constraints, which are not explicitly shown in the figure. (b) Tracing over some of the degrees of freedom (circled by dashed line) brings quantum dimensions to the neighboring bond. (c) Boundary between the top region and the bottom region. Physical legs corresponding to physical degrees of freedom (labeled by $q$) carry factors of quantum dimensions while ancilla degrees of freedom (labeled by $s$) do not.}
    \label{fig:orthogonality_center_for_entanglement_spectrum}
\end{figure*}

\section{\label{app:short_depth_circuit}Finite-depth Circuit}

In this Appendix, we provide a brief analytical justification for considering wavefunctions which may be transformed into string-net wavefunctions by quantum circuits of finite depth. In particular, we make precise the common association of local adiabatic evolution with transformation by a circuit of constant depth.

First we set up some notations. We consider a many-body time-dependent Hamiltonian $H(t)$ defined on a set of degrees of freedom $\Lambda$, each with a Hilbert space $\mathbb{C}^d$, associated to the vertices of a lattice. $\Lambda$ is equipped with a distance. Distances between degrees of freedom $i$ and $j$ are denoted $\dist(i,j)$, and distances between subsets $X,Y\subset \Lambda$ are defined as $\dist(X,Y) = \min_{i \in X, j \in Y}\dist(i,j)$. The diameter of a set $X$ is defined as $\diam(X) = \max_{i,j\in X}\dist(i,j)$. Every set $X$ with $\diam(X) = 1$ has $|X| = O(1)$ (constant density of degrees of freedom). If an operator acts nontrivially on degrees of freedom in a set $X$, it is simply said to act on a set $X$. The support of an operator $O$ is the smallest set $Y\subset \Lambda$ such that for $\Lambda \tbs Y$, $O$ acts as the identity. We consider local time-dependent Hamiltonians $H(t)$, i.e. those which may be written as

\begin{equation}
    H(t) = \sum_X h_X(t)
\end{equation}
where $h_X(t)$ are terms supported on subsets $X \subset \Lambda$ such that $\diam(X) \leq 1$. This guarantees geometric locality of the Hamiltonian. Finally, as we concentrate on time-dependent Hamiltonians, $U_H(t)$ is defined by satisfying $i\partial_t U_H(t) = H(t)U_H(t)$:
\begin{equation}\label{eq:time_dep_unitary}
    U_H(t) = \mathcal{T}\exp\left(-i\int_0^t H(t')dt'\right)
\end{equation}
where $\mathcal{T}$ is the time-ordering operator. Below we concentrate on a local observable $O_Y$ supported on a set $Y$, and the $O(\cdot )$ notation used will carry its usual definition but will additionally suppress factors of $t$, as this is fixed by the quasiadiabatic evolution.

We first recall the standard definition of quantum phases:
\begin{definition}[Phases of gapped Hamiltonians]
Two local, gapped Hamiltonians $H_0$ and $H_1$ are in the same phase if there exists a smooth sequence of Hamiltonians $H(t)$ parameterized by a parameter $t$ such that $H(0) = H_0$,  $H(1) = H_1$, and $H(t)$ is local and gapped for all $t$. 
\end{definition}
Phases of quantum states which are the ground states of local, gapped Hamiltonians are defined similarly. In Ref.~\cite{Hastings05} it was shown that adiabatic evolution of a state may be locally approximated by a local, time-dependent Hamiltonian:

\begin{lemma}[Quasi-adiabatic Evolution~\cite{Hastings05}]\label{lemma:quasiadiabatic} If $\ket{\psi_0}$ and $\ket{\psi_1}$ are in the same phase, then for any local observable $O_Y$ and error $\epsilon$ there exists a smoothly varying local, time-dependent  Hamiltonian $H(t)$ such that:
\begin{equation}
    \mtude{\bra{\psi_1}O_Y\ket{\psi_1}-  \bra{\psi_0}U_H^\dagger (1) O_YU_H(1)\ket{\psi_0}} \leq  \norm{O_Y}\epsilon
\end{equation}
\end{lemma}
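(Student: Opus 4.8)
The plan is to prove the lemma by recapitulating Hastings' quasi-adiabatic continuation and then converting the resulting quasi-local generator into a strictly local, time-dependent Hamiltonian, controlling the conversion error through Lieb-Robinson bounds. Since $\ket{\psi_0}$ and $\ket{\psi_1}$ lie in the same phase, by definition there is a smooth family $H(s)$, $s\in[0,1]$, of local Hamiltonians that are gapped uniformly with gap $\Delta>0$ and whose instantaneous ground state $\ket{\psi(s)}$ interpolates from $\ket{\psi_0}$ to $\ket{\psi_1}$. I identify the time parameter of the lemma with $s$.

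First I would construct the quasi-adiabatic generator
\begin{equation}
    D(s) = \int_{-\infty}^{\infty} dt\, F(t)\, e^{iH(s)t}\,\big(\partial_s H(s)\big)\, e^{-iH(s)t},
\end{equation}
where $F$ is a real, odd filter function whose Fourier transform equals $-1/\omega$ for $|\omega|\ge\Delta$ and which decays faster than any polynomial in $t$. By the standard argument this $D(s)$ is Hermitian and \emph{exactly} transports the ground state, so the unitary $U_D$ it generates satisfies $U_D(1)\ket{\psi_0}\propto\ket{\psi_1}$; hence $\bra{\psi_1}O_Y\ket{\psi_1}=\bra{\psi_0}U_D^\dagger(1)O_Y U_D(1)\ket{\psi_0}$, with the overall phase cancelling in the expectation value. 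This reduces the lemma to comparing the exact generator $D(s)$ with a strictly local approximation $H(t)$.

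Next I would establish quasi-locality and then truncate. Writing $\partial_s H(s)=\sum_X \partial_s h_X(s)$ and inserting Lieb-Robinson bounds for the Heisenberg evolution $e^{iH(s)t}(\partial_s h_X)e^{-iH(s)t}$, I reorganize $D(s)=\sum_Z D_Z(s)$ with $\norm{D_Z(s)}$ decaying faster than any polynomial in $\diam(Z)$, because the ballistic spreading of support at the Lieb-Robinson velocity is cut off by the rapid decay of $F$. I then define the local Hamiltonian $H(t)$ by discarding all terms $D_Z$ with $\diam(Z)>R$ for a cutoff $R$, and coarse-graining the retained short-range terms into blocks so that the $\diam(X)\le 1$ convention of the appendix is met; the discarded part is $\Delta D(s)=\sum_{\diam(Z)>R}D_Z(s)$.

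The main obstacle, and the technical heart of the argument, is bounding the effect of this truncation on the single observable $O_Y$ rather than on the global state, since the operator-norm difference $\norm{D(s)-H(s)}$ is extensive and useless. Instead I would use a Duhamel (telescoping) identity to write $U_D^\dagger(1)O_Y U_D(1)-U_H^\dagger(1)O_Y U_H(1)$ as an integral over $s$ of expectation values of the commutator $[\,\Delta D(s),O_Y(s)\,]$, where $O_Y(s)$ is the Heisenberg-evolved observable. A second application of Lieb-Robinson bounds shows $O_Y(s)$ is supported, up to super-polynomially small tails, within a light cone of radius $\sim v s$ about $Y$, so it has negligible commutator with the far-field terms collected in $\Delta D(s)$, all of which have $\diam(Z)>R$. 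Choosing $R$ large enough --- growing only polylogarithmically in $1/\epsilon$, so that locality and the eventual finite circuit depth are preserved --- makes the accumulated error at most $\norm{O_Y}\epsilon$, establishing the bound.
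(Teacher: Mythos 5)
The paper does not actually prove this lemma --- it is imported verbatim as a known result of Ref.~\cite{Hastings05}, so there is no in-paper argument to compare yours against. What you have written is a sketch of the standard quasi-adiabatic continuation argument (Hastings' original construction, in its later exact-transport refinement using a filter whose Fourier transform is $-1/\omega$ outside the gap and which decays super-polynomially in $t$), followed by a truncation-to-strict-locality step. The architecture is sound and is indeed how one would prove the lemma from scratch: exact parallel transport by $U_D$, quasi-locality of $D(s)$ via Lieb--Robinson bounds convolved against the rapidly decaying filter, truncation at radius $R$, and a Duhamel estimate for the effect of the truncation on a single observable. Note also that your construction implicitly uses that the interpolating path can be taken \emph{uniformly} gapped, which is the standard reading of the paper's Definition of a phase.

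One step is justified incorrectly, though the conclusion survives. In bounding $\int_0^1 ds\,\norm{[\Delta D(s), O_Y(s)]}$ you argue that all terms in $\Delta D(s)$ are ``far-field'' and therefore nearly commute with $O_Y(s)$ by the light-cone bound. But $\Delta D(s)$ collects terms $D_Z$ with $\diam(Z)>R$, and such $Z$ can be centered on or overlap $Y$ itself; for those terms Lieb--Robinson gives you nothing, and the only available bound is $2\norm{D_Z}\norm{O_Y}$. The repair is immediate from your own step 3: split the sum into $Z$ intersecting the (thickened) light cone of $Y$, controlled by the super-polynomial decay of $\norm{D_Z}$ in $\diam(Z)$ summed over the $O(1)$ light-cone volume, and $Z$ disjoint from it, controlled by Lieb--Robinson. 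You should state this split explicitly rather than attributing the whole bound to the light cone. A second, purely quantitative caveat: with a sub-exponentially (but not exponentially) decaying filter, the cutoff $R$ needed for error $\epsilon$ grows slightly faster than $\mathrm{polylog}(1/\epsilon)$; this does not matter for the lemma as stated, which only requires existence at fixed $\epsilon$.
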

The locality of the Hamiltonian depends on the threshold $\epsilon$ and the set $Y$. We now translate this approximation into a circuit of finite-depth and state the main result of this Appendix.

\begin{theorem}\label{thm:adiabatic_evolution_fdc}
If $\ket{\psi_0}$ and $\ket{\psi_1}$ are in the same phase, then for any local observable $O_Y$ and error $\epsilon$ there exists an $O(1)$-depth local quantum circuit $U_c$ such that
\begin{equation}
    \mtude{\bra{\psi_1}O_Y\ket{\psi_1}-  \bra{\psi_0}U_c^\dagger O_Y U_c\ket{\psi_0}} \leq \norm{O_Y}  \epsilon
\end{equation}
and $O(\cdot)$ refers to system-size dependence and suppresses time dependence.
\end{theorem}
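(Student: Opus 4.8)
The plan is to start from Lemma~\ref{lemma:quasiadiabatic}, which already does the physics: it hands us a smoothly varying, strictly local time-dependent Hamiltonian $H(t)$ whose quasi-adiabatic propagator $U_H(1)$ reproduces $\bra{\psi_1}O_Y\ket{\psi_1}$ to within $\norm{O_Y}\epsilon/2$. It then remains a purely technical task: approximate the continuous-time propagator $U_H(1)$ by a finite-depth, geometrically local circuit $U_c$ so that the \emph{additional} error incurred in the expectation value of $O_Y$ is at most $\norm{O_Y}\epsilon/2$. The triangle inequality then delivers the stated bound with the budget $\epsilon$.

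First I would Trotterize in time. Partition $[0,1]$ into $M$ equal steps $t_m = m/M$ and approximate the time-ordered exponential in Eq.~\ref{eq:time_dep_unitary} by the product $\prod_{m=1}^{M}\exp(-iH(t_m)/M)$. Within each slice, $H(t_m) = \sum_X h_X(t_m)$ is a sum of terms of diameter $\leq 1$ on a bounded-degree lattice; a standard coloring of the interaction graph groups these terms into a constant number $c$ of layers, each consisting of mutually disjoint --- hence commuting --- few-body gates. Exponentiating one layer at a time realizes $\exp(-iH(t_m)/M)$ as a product of $c$ layers of $k$-local unitaries in exactly the sense of the main text, so the full approximant $U_c$ is a local quantum circuit of depth $cM$.

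The heart of the argument is to show that $M$ can be taken $O(1)$ in the system size. A naive first-order Trotter bound scales with the extensive operator norm $\norm{H}$, which would force $M$ to grow with $\mtude{\Lambda}$. To decouple the error from the system size I would invoke a Lieb--Robinson bound. Because $H(t)$ is strictly local, the Heisenberg-evolved observable $U_H^\dagger(1)\,O_Y\,U_H(1)$ is, up to an error decaying exponentially in distance, supported on the enlarged region $\tilde Y$ obtained by thickening $Y$ by the Lieb--Robinson radius $v_{\mathrm{LR}}\cdot 1 = O(1)$; the Trotter circuit $U_c$ is itself strictly local with the same finite light cone. Hence in the expectation value one may replace $H(t)$ by its restriction $H_{\tilde Y}(t)$ to $\tilde Y$ at the cost of an exponentially small error. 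On the finite region $\tilde Y$ one has $\norm{H_{\tilde Y}} = O(\mtude{\tilde Y}) = O(1)$, so the standard first-order Trotter estimate gives a total error $O\!\left(M\,(\norm{H_{\tilde Y}}/M)^2\right) = O(1/M)$ in operator norm, and therefore in $\bra{\psi_0}U_c^\dagger O_Y U_c\ket{\psi_0}$. Choosing $M = O(1/\epsilon)$ makes this at most $\norm{O_Y}\epsilon/2$; since $M$ then depends only on $\epsilon$ and $\mtude{\tilde Y}$ --- both independent of $\mtude{\Lambda}$ --- the circuit depth $cM = O(1)$ in system size, as claimed.

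The main obstacle is precisely this system-size independence: one must decouple the Trotter error from the extensive norm of $H$, which the Lieb--Robinson light cone accomplishes by confining everything relevant to the finite region $\tilde Y$. A secondary bookkeeping point is that the range of $H(t)$ supplied by Lemma~\ref{lemma:quasiadiabatic} depends on $\epsilon$ and $Y$; this range in turn fixes the locality $k$ of the gates and feeds into the Lieb--Robinson velocity, but for fixed $\epsilon$ and $Y$ all of these quantities remain finite and system-size independent, so the conclusion is unaffected.
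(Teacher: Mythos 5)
Your proposal is correct and follows essentially the same route as the paper's proof: invoke the quasi-adiabatic continuation lemma to get a local $H(t)$, use the Lieb--Robinson bound to confine the error analysis for $O_Y$ to a finite region so that the discretization/Trotter error is controlled by an $O(1)$ norm rather than the extensive $\norm{H}$, and then realize each time slice as a constant number of layers of disjoint commuting gates via a coloring of the interaction graph. The paper merely organizes the bookkeeping slightly differently (an explicit piecewise-constant intermediate Hamiltonian and a separate unitary-approximation lemma instead of a single first-order Trotter estimate), but the key ideas and the resulting system-size-independent depth are identical.
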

This justifies the use of finite-depth circuits to characterize the phase diagram of string-net liquids. To prove this, we work in the Heisenberg picture and prove two approximation lemmas. The first is an application of the Lieb-Robinson bound:
\begin{lemma}[Lieb-Robinson Bound~\cite{Lieb72, Nachtergaele06, Hastings06}]\label{lemma:lieb_robinson} Let $H(t)$ be a local Hamiltonian and $O_Y$ be an operator supported on a set $Y$. Then there exist positive constants $v,\mu$  such that for any $r \geq vt$ and set $X$ with $\dist(X,Y) = r$, 

\begin{equation}
    \norm{[O_Y(t), O_X]} \leq O(\norm{O_Y}\norm{O_X}\mtude{Y}e^{-\mu r} )
\end{equation}
where $O(\cdot)$ notation suppresses dependence on numerical constants and $t$, which are not relevant to this analysis.
\end{lemma}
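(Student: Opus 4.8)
The plan is to prove this by the standard differential-inequality method, working entirely in the Heisenberg picture. I would write $O_Y(t) = U_H^\dagger(t)\,O_Y\,U_H(t)$ and study $f(t) = [O_Y(t), O_X]$. Using the equation of motion $\partial_t O_Y(t) = i[H(t), O_Y(t)]$ together with the Jacobi identity gives
\begin{equation}
    \partial_t f(t) = i[H(t), f(t)] - i\bigl[O_Y(t),\, [H(t), O_X]\bigr].
\end{equation}
The first term merely generates conjugation by the propagator and is therefore norm-preserving; absorbing it with an integrating factor (equivalently using $\norm{U f U^\dagger} = \norm{f}$) converts the differential equation into the integral inequality
\begin{equation}
    \norm{f(t)} \leq \norm{[O_Y, O_X]} + \int_0^t \norm{\bigl[O_Y(s),\, [H(s), O_X]\bigr]}\, ds.
\end{equation}
Since $\dist(X,Y)=r\geq 1$ implies disjoint supports, the initial term $\norm{[O_Y,O_X]}$ vanishes, leaving a purely integral bound.

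Next I would exploit geometric locality. Only interaction terms $h_Z(s)$ whose support meets $X$ survive in $[H(s),O_X]=\sum_{Z\cap X\neq\emptyset}[h_Z(s),O_X]$, and each contributes an operator supported on a set $Z$ of diameter at most $1$ adjacent to $X$, bounded by $\norm{[h_Z(s),O_X]}\leq 2\norm{h_Z(s)}\norm{O_X}$. Feeding this back into the integral reproduces a commutator of the \emph{same} evolved operator $O_Y(s)$ against a new, slightly enlarged local operator. Defining the quantity $C(S,t)=\sup\{\norm{[A(t),O_X]}/\norm{A} : \supp(A)\subseteq S\}$ and iterating the inequality $n$ times expresses the bound as a Lieb-Robinson series: a sum over sequences of interaction supports $Z_1,Z_2,\dots,Z_n$ whose consecutive overlaps form a connected chain starting from $Y$ and reaching $X$. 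Each link carries a factor proportional to $\int\norm{h}$, and the nested time integrations generate the combinatorial weight $t^n/n!$.

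Finally I would perform the geometric and combinatorial bookkeeping. Because every interaction support has $\diam(Z)\leq 1$ and the lattice has constant density (bounded coordination), any connected chain linking $Y$ to $X$ must contain at least $n\gtrsim \dist(X,Y)=r$ links, while the number of such chains grows at most exponentially in $n$. Summing a tail of the form $\sum_{n\geq r}(cJt)^n/n!$, where $J$ bounds the local interaction strength, yields for $r\geq vt$ a bound of the shape $\norm{O_Y}\norm{O_X}\,\mtude{Y}\,e^{-\mu r}$, with the Lieb-Robinson velocity $v$ set by $J$ and the coordination number and the factor $\mtude{Y}$ counting the possible chain origins within $\supp(O_Y)$. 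The main obstacle I anticipate is precisely this last step: controlling the number of connected chains and verifying that the series converges with the correct light-cone structure---exponential decay for $r\geq vt$---uniformly in system size. This is exactly where the locality hypothesis $\diam(X)\leq 1$ and the constant-density assumption are indispensable, and keeping the combinatorial growth subfactorial relative to the $1/n!$ weight is the crux of the argument.
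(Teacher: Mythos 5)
First, a point of reference: the paper does not prove this lemma at all --- it is imported verbatim from the literature (\cite{Lieb72, Nachtergaele06, Hastings06}) and used as a black box in the finite-depth-circuit appendix, so there is no in-paper argument to compare against. Your sketch is the standard iteration proof of that cited result, and its skeleton is the right one: Heisenberg equation of motion, Jacobi identity, integrating factor to kill the norm-preserving term, vanishing of $[O_Y,O_X]$ at $t=0$ for disjoint supports, a series over connected chains of interaction supports weighted by $t^n/n!$, and a tail estimate $\sum_{n\geq r}(cJt)^n/n!$ producing the light cone and the factor $\mtude{Y}$.

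There is, however, a genuine gap at the locality step, and it matters because the paper applies the bound to a \emph{time-dependent} $H(t)$. For time-dependent evolution the Heisenberg equation is $\partial_t O_Y(t) = i[\tilde{H}(t), O_Y(t)]$ with $\tilde{H}(t) = U_H^\dagger(t) H(t) U_H(t)$, not the bare $H(t)$. The second Jacobi term is therefore $[O_Y(t), [\tilde{H}(t), O_X]]$, and you cannot discard the terms with $Z\cap X = \emptyset$ from $\tilde{H}(t) = \sum_Z U_H^\dagger(t) h_Z(t) U_H(t)$: each conjugated $h_Z$ has spread support, and bounding its commutator with $O_X$ is the Lieb--Robinson bound itself, so the step "only interaction terms whose support meets $X$ survive" is circular as written. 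The standard repair is to localize on the $Y$ side \emph{before} evolving, using $[\tilde{H}(t), O_Y(t)] = U_H^\dagger(t)[H(t),O_Y]U_H(t) = U_H^\dagger(t)[H_Y(t),O_Y]U_H(t)$ with $H_Y = \sum_{Z\cap Y\neq\emptyset} h_Z$; the resulting recursion then involves commutators of \emph{evolved, originally local} operators supported on successively larger neighborhoods of $Y$ against the fixed $O_X$, i.e.\ the support grows outward from $Y$ toward $X$. Relatedly, the quantity $C(S,t)$ you define (supremum over evolved operators supported in $S$, commuted with the fixed $O_X$) is the correct object for that $Y$-growing recursion, but it does not match the recursion you actually derived, which enlarges the \emph{fixed} operator around $X$; both versions close for time-independent $H$, but only the $Y$-side version survives unchanged in the time-dependent setting, so you need to commit to it consistently. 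The combinatorial endgame --- chains of length at least $r$ because $\diam(Z)\leq 1$, exponential chain counting beaten by $1/n!$, and the choice of initial link contributing $\mtude{Y}$ --- is fine once the recursion is set up coherently.
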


\begin{figure}
    \centering
    \includegraphics[width=.7\linewidth]{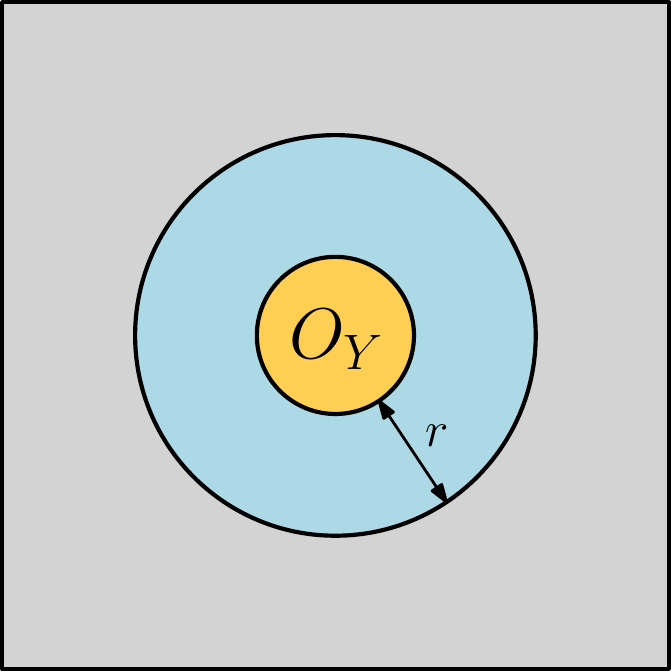}
    \caption{\label{fig:local_observable}Schematic representation of partition of $\Lambda$ used in Eq.~\ref{eq:partitioned_H}. Set $R$ is depicted by the blue and orange regions, while $R^c$ is denoted by gray. $r = \dist(Y,R^c)$.}
\end{figure}

We now use the Lieb-Robinson bound to construct piecewise time-independent approximations to the terms in the Hamiltonian. Define the following piecewise time-independent local term $h_X^\text{pw}(t)$ for $X$ as
\begin{equation}
    h_X^\text{pw}(t) = h_X\left(\frac{\lfloor m t\rfloor}{m}\right)
\end{equation} 
where $m$ is a parameter discretizing the time over the interval $0 < t < 1$. Since $h_X(t)$ is smooth, there exists a constant $\epsilon_\text{pw}$ such that for all $X,t$, $\norm{h_X^\text{pw}(t) - h_X(t)}\leq \epsilon_\text{pw}$ where $\epsilon_\text{pw}$ depends on $m$ (ultimately, the depth of the circuit).

Let $O_Y$ be a local observable supported on a set $Y$ and consider a partition of $\Lambda = R \cup R^c$ ($R \cap R^c = \emptyset$) such that $Y \subset R$. This decomposition is depicted in Fig.~\ref{fig:local_observable}, where $O_Y$ is supported in the orange region, and $R$ is supported in the blue and orange regions. Define the following partially piecewise time-independent Hamiltonian $H_R(t)$:

\begin{equation}
    H_R(t) = \sum_{X \subset R} h_X(t) + \sum_{X : X \cap R^c \neq \emptyset}h_X^{\text{pw}}(t)
    \label{eq:partitioned_H}
\end{equation}

We now bound the error between evolution by $H(t)$ and $H_R(t)$. For conciseness, we write $U_H(t)$ as $U_H^{t}$ and suppress time-dependence of $H$:
\begin{align}
&\norm{U_H^{t\dagger} O_Y U_H^t - U_{H_R}^{t\dagger} O_Y U_{H_R}^t}    \\
&= \bignorm{\int_0^t dt' \partial_{t'} U_H^{t-t'\dagger} U_{H_R}^{t'\dagger}O_Y U_{H_R}^{t'} U_{H_R}^{t-t'}} \\
&\leq \int_0^t dt' \norm{U_H^{t-t'\dagger} [H - H_R, U_{H_R}^{t'\dagger} O_Y U_{H_R}^{t'}]U_H^{t-t'}} \\
&= \int_0^t dt' \sum_{X : X \cap R^c \neq \emptyset} \norm{[h^{\text{pw}}_X - h_X, U_{H_R}^{t'\dagger} O_Y U_{H_R}^{t'}]}
\end{align}
where triangle inequality and unitary invariance of the norm have been used. From the last line, we apply Lemma~\ref{lemma:lieb_robinson} and sum over all $X$ to obtain that for any $t$, there exists a finite value $r_0$ such that for $r \geq r_0$ and $R$ chosen such that $\dist(Y, R^c) = r$
\begin{equation}
    \norm{U_H^{t\dagger} O_Y U_H^t - U_{H_R}^{t\dagger} O_Y U_{H_R}^t} \leq O(\norm{O_Y}|X|\epsilon_\text{pw} e^{-\mu r})
    \label{eq:first_sampling_approx}
\end{equation}

Finally, we analyze the error due to approximating terms within $R$ by their piecewise-constant counterparts. To do this, we use the following lemma:

\begin{lemma}[Unitary Approximation]\label{lemma:unitary_approx} Let $A(t)$ and $B(t)$ be two time-dependent Hermitian operators such that 

\begin{equation}
\max_{0\leq t' \leq t} \norm{B(t') - A(t')} \leq \epsilon
\end{equation}
Define unitary $U_A(t)$ as in Eq.~\ref{eq:time_dep_unitary}. Then

\begin{eqnarray}
    &\norm{U_B(t) - U_A(t)} \leq \epsilon  t \\
    &\norm{U_B(t)O U_B^\dagger(t) - U_A(t) O U_A^\dagger(t)} \leq 2\epsilon t\norm{O} \label{eq:operator_approx}
\end{eqnarray}
\end{lemma}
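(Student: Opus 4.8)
The plan is to prove both bounds by a standard Duhamel (variation-of-constants) argument, handling the difference of the two time-ordered exponentials through the derivative of a single auxiliary product of unitaries. First I would introduce $W(t') = U_A^\dagger(t') U_B(t')$, which satisfies $W(0) = I$, and differentiate it. Using the defining relation $\partial_{t'} U_B(t') = -i B(t') U_B(t')$ together with $\partial_{t'} U_A^\dagger(t') = i U_A^\dagger(t') A(t')$ (the latter obtained by taking the adjoint of Eq.~\ref{eq:time_dep_unitary} and using that $A(t')$ is Hermitian), the two $U_A^\dagger (\cdot) U_B$ contributions combine so that only the difference of the generators survives:
\begin{equation}
    \partial_{t'} W(t') = i\, U_A^\dagger(t')\big(A(t') - B(t')\big) U_B(t').
\end{equation}
Integrating from $0$ to $t$, taking the operator norm, and using that $U_A, U_B$ are unitary (so the conjugating factors drop out) together with the hypothesis $\norm{B(t') - A(t')} \leq \epsilon$ gives $\norm{W(t) - I} \leq \int_0^t \norm{A(t') - B(t')}\, dt' \leq \epsilon t$. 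Since $\norm{U_B(t) - U_A(t)} = \norm{U_A(t)\big(W(t) - I\big)} = \norm{W(t) - I}$ by unitary invariance of the norm, the first inequality follows.

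For the second bound I would set $\Delta = U_B(t) - U_A(t)$, so that $\norm{\Delta} \leq \epsilon t$ by the first part, and telescope the conjugated observable as
\begin{equation}
    U_B O U_B^\dagger - U_A O U_A^\dagger = \Delta\, O\, U_B^\dagger + U_A\, O\, \Delta^\dagger.
\end{equation}
Taking norms, applying submultiplicativity, and using $\norm{\Delta^\dagger} = \norm{\Delta}$ together with $\norm{U_A} = \norm{U_B} = 1$, each of the two terms is bounded by $\epsilon t \norm{O}$, which yields the claimed $2\epsilon t \norm{O}$ (here $t$ is fixed, and I would suppress time-ordering subtleties since the bound only uses the first-order estimate of the first part).

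I do not expect a genuine obstacle: the single point requiring care is the differentiation of $W(t')$, where one must correctly produce the adjoint factor $\partial_{t'} U_A^\dagger = i\, U_A^\dagger A$ and verify that the $A$-terms cancel precisely so that only $A - B$ remains inside the integral. Everything else reduces to the triangle inequality, submultiplicativity, and unitary invariance of the operator norm; these are the same tools already used in the interpolation estimate leading to Eq.~\ref{eq:first_sampling_approx}.
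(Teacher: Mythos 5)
Your argument is correct and is essentially the paper's own proof: the paper introduces the interaction-picture generator $\Delta(t) = U_A^\dagger(t)\bigl(B(t)-A(t)\bigr)U_A(t)$ and observes $U_\Delta(t) = U_A^\dagger(t)U_B(t)$, which is exactly your $W(t)$, so your Duhamel/ODE derivation of $\norm{W(t)-I}\leq \epsilon t$ and the unitary-invariance step coincide with the paper's bound $\norm{U_\Delta(t)-U_\Delta(0)}\leq \epsilon t$. Your telescoping identity for $U_B O U_B^\dagger - U_A O U_A^\dagger$ is just an explicit filling-in of the paper's ``Eq.~\ref{eq:operator_approx} follows similarly.''
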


\begin{proof}
Define $\Delta(t) = U_A^\dagger(t)(B(t) - A(t)) U_A(t)$ and $U_\Delta(t)$ as above. Then it can be checked that $U_\Delta(t) = U_A^\dagger(t)U_B(t)$. Now
\begin{equation}
    \begin{aligned}
    \norm{U_B(t) - U_A(t)} &= \norm{U_\Delta(t) - U_\Delta(0)}\leq \epsilon \cdot t
    \end{aligned}
\end{equation}
Eq.~\ref{eq:operator_approx} follows similarly.
\end{proof}

We now construct a completely piecewise Hamiltonian $H^\text{pw}(t)$:

\begin{equation}
    H^{\text{pw}}(t) = \sum_{X \subset R} h_X^{\text{pw}}(t) + \sum_{X : X \cap R^c \neq \emptyset}h_X^{\text{pw}}(t)
\end{equation}
Using Eq.~\ref{eq:operator_approx}, the error from this step is $O(\norm{O_Y}|R|\epsilon_{\text{pw}})$. By adjusting $m$, this error along with that from Eq.~\ref{eq:first_sampling_approx} is reduced.

Finally, each piece of  $U_\text{pw}(t)$ may be Trotterized as
\begin{equation}
    U_\text{pw}(t) = \exp\left(-i\sum_{\{\mathcal{S}\}}\sum_{X\in \mathcal{S}}\frac{h_X^{\text{pw}}}{n}\right)^{n}  \approx \left(\prod_{\{\mathcal{S}\}} e^{-iH_\mathcal{S}/n}\right)^n
\end{equation}
where $\{\mathcal{S}\}$ is a partition of the sets $X$ such that $X\cap W=\emptyset$ for all sets $X\neq W$ within a partition $\mathcal{S}$, and $H_\mathcal{S} = \sum_{X\in \mathcal{S}}h_X$. Each $H_\mathcal{S}$ is a local commuting Hamiltonian. The preceding analysis is then applied again to the error for each $U_\text{pw}$ to see that local errors again dominate. By increasing $m$ and $n$, the error may be lowered arbitrarily and independently of the system size, up to the error set by the choice of $H(t)$ in the original quasiadiabatic evolution. The total depth of the circuit  then is $mn\mtude{\{\mathcal{S}\}}$, independent of system size. The main theorem is thus proved. We conclude with some comments. For brevity, we have concentrated on strictly local Hamiltonians. However, this analysis may be extended to Hamiltonians satisfying more general locality criteria. It may also be important to understand the errors arising for other operators, such as correlation functions and string operators.


\bibliography{main}
\end{document}